\pdfoutput=1
\documentclass[final,3p,times]{elsarticle}

\usepackage{amsmath,amssymb,amsthm,mathtools}
\usepackage{bm}
\usepackage{booktabs}
\usepackage{graphicx}
\usepackage{subcaption}
\usepackage{placeins}
\usepackage{multirow}
\usepackage{array}
\usepackage[ruled,vlined,linesnumbered]{algorithm2e}
\usepackage{enumitem}
\usepackage{tikz}
\usetikzlibrary{arrows,positioning,fit,backgrounds,calc}
\usepackage{hyperref}

\newtheorem{theorem}{Theorem}
\newtheorem{lemma}{Lemma}

\newtheorem{proposition}{Proposition}
\newtheorem{remark}{Remark}

\newtheorem{assumption}{Assumption}

\newcommand{\R}{\mathbb{R}}

\newcommand{\cA}{\mathcal{A}}

\newcommand{\cB}{\mathcal{B}}
\newcommand{\cN}{\mathcal{N}}

\newcommand{\cU}{\mathcal{U}}
\newcommand{\cR}{\mathcal{R}}
\newcommand{\cE}{\mathcal{E}}

\newcommand{\bx}{\mathbf{x}}
\newcommand{\bz}{\mathbf{z}}
\newcommand{\bw}{\mathbf{w}}
\newcommand{\bc}{\mathbf{c}}
\newcommand{\bA}{\mathbf{A}}
\newcommand{\bb}{\mathbf{b}}

\newcommand{\bzero}{\mathbf{0}}
\newcommand{\lr}{\mathrm{L\text{-}RFM}}
\newcommand{\TableFont}{\footnotesize}

\newcommand{\FigDiagnosticWidth}{0.82\linewidth}
\newcommand{\FigFieldWidth}{0.88\linewidth}

\newcommand{\FigSchematicWidth}{0.92\textwidth}

\newcommand{\maybeincludegraphics}[2][]{%
  \IfFileExists{#2}{\includegraphics[#1]{#2}}{%
    \fbox{\parbox{0.9\linewidth}{Missing figure: \texttt{\detokenize{#2}}}}%
  }%
}

\journal{}

\begin{document}

\begin{frontmatter}

\title{Liquid Random Feature Methods for Time-Dependent Partial Differential Equations}

\author[xidian]{Jiale Linghu}
\author[nus]{Yangshuai Wang\corref{cor1}}
\ead{yswang@nus.edu.sg}
\cortext[cor1]{Corresponding author.}
\address[xidian]{School of Mathematics and Statistics, Xidian University, Xi'an, China}
\address[nus]{Department of Mathematics, National University of Singapore, Singapore}

\begin{abstract}
A central challenge in mesh-free space--time approximation for time-dependent partial differential equations is to represent evolving temporal scales while keeping residual minimization computationally tractable.
Random feature methods simplify this algebraic problem by freezing nonlinear trial functions and fitting only a linear readout, but standard static space--time activations provide no explicit relaxation-scale mechanism, making temporal-scale resolution a finite-dimensional bottleneck in stiff, dispersive, or multi-scale regimes.
We introduce liquid random feature methods (L-RFM), which replace static temporal activations by closed-form liquid time-constant responses with sampled relaxation scales.
The resulting frozen features form temporally structured local or global trial spaces with analytic space--time derivatives for residual least-squares assembly.
A density theorem proves density of the deterministic trial spaces in the continuous space--time function class, and a temporal-rank calculation clarifies the role of sampled relaxation scales. Ablation and finite-feature tests identify the liquid temporal response as the primary source of the observed accuracy improvement.
Across stiff reaction--diffusion, nonlinear transport, dispersive, complex-valued, and multidimensional benchmarks, L-RFM improves finite-feature accuracy in regimes where temporal-scale representation controls the approximation.
By embedding relaxation scales directly into frozen trial functions, L-RFM provides a route to high-accuracy continuous space--time surrogates for evolutionary PDEs while preserving the simplicity of linear least-squares solvers.
\end{abstract}

\begin{keyword}
random feature method \sep liquid time-constant networks \sep time-dependent PDEs \sep least squares \sep partition of unity
\end{keyword}

\end{frontmatter}

\section{Introduction}
\label{sec:introduction}

Time-dependent partial differential equations (PDEs) are central models in computational science and engineering, from diffusion and transport to waves and fluid motion \cite{leveque2007finite}.
Their numerical solution requires spatial structure and evolving temporal scales to be represented together, a requirement shared by time-marching and space--time formulations \cite{leveque2007finite,hughes1988spacetime}.
Problem-tailored solvers for highly oscillatory quantum dynamics, nonlocal Bose--Einstein condensates, singular Schr{\"o}dinger models, and geometric interface evolution further show that temporal-scale representation and structure preservation are often decisive for accuracy and stability \cite{bao2016diracMTI,bao2016dipolarNUFFT,bao2019logSchrodinger,bao2017dewettingPFEM}.
Mesh-free space--time collocation and residual-based neural or random-feature solvers extend this viewpoint by constructing continuous surrogates and off-grid derivatives without prescribing a space--time mesh \cite{raissi2019pinn,karniadakis2021review,dwivedi2020pielm,chen2023strfm}.
Their practical accuracy depends on the ability of a finite trial space to carry the temporal scales of the evolution while keeping residual minimization tractable.
This temporal representation problem is especially acute in stiff, dispersive, or multi-scale regimes, where improving sampling or linear algebra alone cannot compensate for trial functions that do not contain the relevant time scales.

In residual least-squares space--time methods, the unknown field is approximated directly on the space--time domain, and the governing equation together with the initial and boundary conditions is enforced at sampled collocation points.
This converts an evolutionary solve into a finite-dimensional approximation problem, with the temporal content of the trial functions playing a decisive role.
If those functions contain only static space--time activations, fast transients, slow relaxation, and interacting temporal scales must be represented indirectly through the sampled activation geometry.

Existing residual-based mesh-free solvers make different compromises between approximation flexibility and algebraic simplicity.
Physics-informed neural networks (PINNs) enforce PDE residuals by training all network parameters \cite{raissi2019pinn,karniadakis2021review}, but the resulting nonconvex optimization is often sensitive to architecture, sampling, and loss weighting \cite{wang2022ntk,wu2023comprehensive}.
Operator-learning methods, including DeepONet \cite{lu2021deeponet} and the Fourier neural operator \cite{li2021fno}, learn maps from problem data to solution fields and are therefore complementary to one-shot forward solvers.
Random feature methods (RFMs) offer a more direct route for such solves: nonlinear features are sampled once and frozen, and only the linear readout is fitted by least squares \cite{rahimi2007random,chen2022rfm,dwivedi2020pielm,huang2006elm}.
This fixed-feature structure preserves a simple linear algebraic core, but for evolutionary PDEs the key unresolved issue is how the frozen trial functions should encode temporal response.

The closest least-squares predecessors are PIELM \cite{dwivedi2020pielm}, XTFC \cite{schiassi2021xtfc}, the PoU random feature method of Chen, Chi, E, and Yang \cite{chen2022rfm}, the space--time random feature method (ST-RFM) \cite{chen2023strfm}, and related local or domain-decomposed ELM/RFM solvers \cite{dong2021localelm,dong2022hyperparameter,ni2023hcelm,wang2024elmhighdim,vanbeek2026featurefilter,anderson2026elmfbpinn}.
Recent RFM developments have also addressed interface, discontinuity, and complex-domain elliptic problems \cite{chi2024interface,sun2025twolevel,song2026discontinuity}, while multilevel domain-decomposition and hybrid-learning formulations have improved the scalability and reliability of physics-informed or fixed-basis architectures \cite{dolean2024multilevel,wu2026hybrid}.
These methods show that fixed hidden features, PoU localization, and domain-decomposition structure can produce efficient residual collocation solvers, but their temporal components remain static space--time activations or ridges.
L-RFM keeps the same least-squares philosophy and, in the local variant, the same PoU localization principle, while replacing the static temporal response by a closed-form liquid relaxation response.

The liquid time-constant (LTC) model motivates this mechanism through hidden dynamics with input-dependent relaxation \cite{hasani2021ltc}.
Physics-informed liquid networks have also been used by optimizing liquid-network parameters or hidden states against PDE residuals \cite{sun2024piln}, alongside broader PINN and operator-learning variants for nonlinear and heterogeneous PDE mappings \cite{huang2022hompinns,zheng2024hompinns,zhang2024d2no}.
L-RFM uses the liquid response differently.
The scalar dynamics are solved in closed form, the parameters are sampled once, and the resulting functions are frozen before residual assembly.
Thus the method remains a random-feature least-squares solver rather than a trained liquid network.
It is also related to reservoir computing, echo state networks, and random dynamical features \cite{jaeger2001esn,maass2002lsm,lukosevicius2009reservoir,racca2021apiesn,piatti2026rcde}, but the fitted readout is determined by space--time PDE residual collocation rather than by sequence regression.

We address this temporal representation gap with liquid random feature methods (L-RFM).
The central idea is to replace static temporal activations by closed-form liquid time-constant responses whose relaxation scales are sampled and then frozen.
The method therefore changes where temporal information enters the approximation: not through nonlinear training or a time-marching update, but through the construction of the frozen trial functions themselves.
At the feature level, the temporal response has the form
\begin{equation}
\phi(\bx,t)=s(\bx)+\bigl(h^0(\bx)-s(\bx)\bigr)\exp\!\bigl(-(\tau^{-1}+g(\bx))t\bigr),\quad
s(\bx)=\frac{g(\bx)A}{\tau^{-1}+g(\bx)}.
\label{eq:intro-closed}
\end{equation}
Here $\bx$ and $t$ denote space and time, $\phi$ is a frozen scalar feature, $\tau>0$ is a sampled relaxation time, $A$ is a sampled amplitude, and $g(\bx)$ and $h^0(\bx)$ are sampled spatial profiles.
The quantity $s(\bx)$ is the local steady response, while $\tau^{-1}+g(\bx)$ sets the decay rate toward that response.
Thus the word ``liquid'' refers to a frozen closed-form relaxation response, not to a trained recurrent network in the least-squares solve.
The closed form provides analytic derivatives for PDE residual assembly while keeping the readout linear.
Sampling $\log\tau$ across decades gives the finite trial space an explicit multi-scale relaxation spectrum, rather than relying on static activations to recover temporal scales indirectly.

The same temporal primitive leads to two complementary trial-space realizations.
The localized form multiplies liquid features by PoU weights on spatial patches, whereas the global form uses the same liquid features in a single non-localized trial space.
Because the two realizations share the temporal response, analytic derivative structure, and LS/Picard residual assembly, their comparison separates the effect of temporal relaxation from that of spatial localization.
In both cases, all hidden parameters remain fixed and only least-squares readout problems are solved; nonlinear equations are treated by Picard-linearized residual systems, and long intervals can be handled by block marching.
The analysis has two complementary roles: it proves density of deterministic local and global trial spaces through a separable ridge--exponential subfamily, and it gives a finite-dimensional temporal-rank calculation that explains how sampled relaxation scales expand the temporal representation available to the trial space.
The numerical results emphasize the finite-feature regime that determines practical performance in residual collocation, identifying how liquid temporal response improves accuracy at fixed feature counts and when spatial localization is needed.
This positions L-RFM as a temporally structured fixed-feature framework for regimes in which temporal-scale representation is central to the approximation.

The contribution of this work is fourfold.
First, we formulate L-RFM as a fixed-feature least-squares collocation framework in which the temporal trial functions contain sampled liquid relaxation scales.
Second, we develop local and global trial spaces based on the same liquid primitive, so that the effects of temporal relaxation and spatial localization can be examined within a common formulation.
Third, we derive closed-form space--time derivatives and residual systems for linear, nonlinear, complex-valued, multidimensional, and block-marched problems, without differentiating through a numerical ODE solver.
Finally, we combine density and temporal-rank results for the deterministic trial spaces with numerical studies showing when the liquid temporal response improves finite-feature accuracy for stiff, dispersive, nonlinear, and multidimensional evolutionary PDEs.

This paper is organized as follows.
Section~\ref{sec:method} defines the liquid feature primitive, the localized and global trial spaces, the analytic derivative formulas, and the least-squares residual assembly used for linear, nonlinear, complex-valued, and block-marched solves.
Section~\ref{sec:theory} proves density of the deterministic L-RFM trial spaces and gives a finite-dimensional temporal-rank calculation that clarifies the role of sampled relaxation scales.
Section~\ref{sec:experiments} verifies the implementation and evaluates the method on representative time-dependent PDEs, including local/global comparisons, ablations, conditioning measurements, multidimensional checks, long-time marching, and timing.
Section~\ref{sec:discussion} summarizes the regimes in which L-RFM is most useful, the current theoretical and computational limitations, and the main directions for further development.

\section{Liquid Random Feature Methods}
\label{sec:method}

This section defines the L-RFM discretization used throughout the paper.
The construction has three components: a closed-form liquid feature primitive, a spatial support rule that yields local or global trial spaces, and a residual least-squares assembly for linear, Picard-linearized nonlinear, and block-marched solves.
We consider the initial-boundary-value problem
\begin{equation}
\begin{aligned}
\partial_t u(\bx,t)+\cN[u](\bx,t)&=f(\bx,t), &&(\bx,t)\in\Omega\times(0,T],\\
u(\bx,0)&=u_0(\bx), &&\bx\in\overline{\Omega},\\
\cB[u](\bx,t)&=g_{\partial}(\bx,t), &&(\bx,t)\in\partial\Omega\times(0,T],
\end{aligned}
\label{eq:method-ibvp}
\end{equation}
where $\Omega\subset\R^d$ is bounded and Lipschitz, $\cN$ is a possibly nonlinear spatial differential operator, and $\cB$ is a linear boundary operator.
The presentation focuses on the trial-space construction and the resulting residual-collocation algebraic system for a well-posed problem of this form.
All definitions are written for general spatial dimension $d$, with the same assembly used in one, two, and three dimensions.
Figure~\ref{fig:lrfm-schematic} gives the complete workflow before the individual components are defined.
The construction starts from the IBVP data, samples frozen liquid features with prescribed relaxation scales, assigns local or global spatial support, evaluates analytic derivatives, and solves the resulting weighted least-squares readout problem.

\begin{center}
\centering
\maybeincludegraphics[width=\FigSchematicWidth]{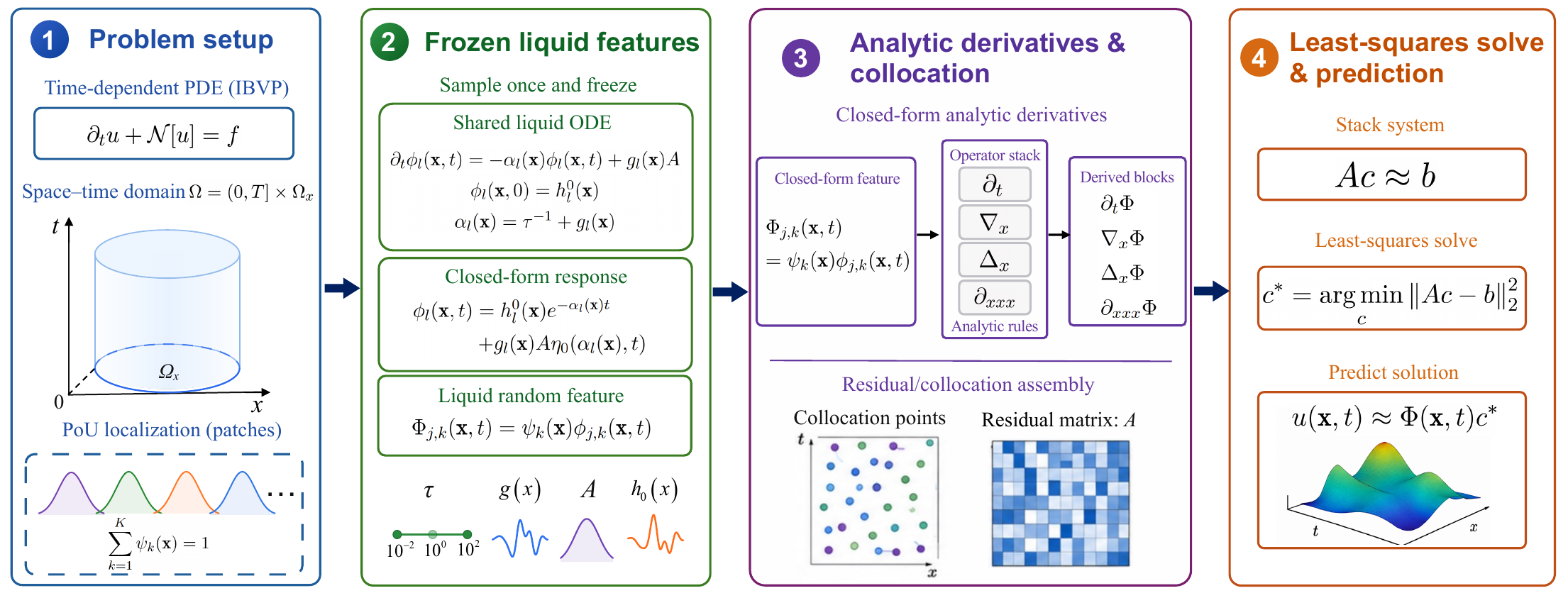}
\captionof{figure}{Overview of the L-RFM construction and residual least-squares solve. The method forms local or global space--time trial functions from frozen liquid random features, evaluates closed-form responses and analytic derivatives, assembles the weighted collocation residual, and solves the linear least-squares readout problem.}
\label{fig:lrfm-schematic}
\end{center}

\subsection{Shared liquid primitive and local/global trial spaces}
\label{subsec:trial-space}

We first specify the shared feature family and then define the local and global trial spaces.
Stable derivative formulas, residual least-squares assembly, Picard linearization, and block marching are developed in Sections~\ref{subsec:closed-form}--\ref{subsec:algorithm}.

\subsubsection{Shared closed-form liquid primitive}
We separate the construction into a temporal primitive and a spatial support rule.
The temporal primitive is shared by all L-RFM variants; locality enters through the support assigned to the final trial functions.
Let
\[
\theta=(\tau,A,\bw,b,\bw^0,b^0)
\]
be a sampled parameter tuple.
On a spatial coordinate chart $\zeta(\bx)$, which is either a local patch coordinate $\bz_k(\bx)$ or a global affine normalization $\bz(\bx)$, define
\begin{equation}
g_\theta(\bx)=\tanh\bigl(\bw^\top\zeta(\bx)+b\bigr),\qquad
h^0_\theta(\bx)=\tanh\bigl((\bw^0)^\top\zeta(\bx)+b^0\bigr),
\label{eq:gh0}
\end{equation}
and
\[
\alpha_\theta(\bx)=\tau^{-1}+g_\theta(\bx).
\]
The sampled rate $\alpha_\theta$ is allowed to be positive, small, or sign-changing on the finite computational window; Section~\ref{subsec:closed-form} gives the bounded-window estimate and removable-limit evaluation used near $\alpha=0$.
The feature $\phi_\theta$ is the solution of the scalar frozen liquid ODE
\begin{equation}
\partial_t\phi_\theta(\bx,t)=-\alpha_\theta(\bx)\phi_\theta(\bx,t)+g_\theta(\bx)A,\qquad
\phi_\theta(\bx,0)=h^0_\theta(\bx).
\label{eq:ltc-feature}
\end{equation}
For stable numerical evaluation, we use the Duhamel form
\begin{equation}
\begin{aligned}
\phi_\theta(\bx,t)
&=h^0_\theta(\bx)e^{-\alpha_\theta(\bx)t}
+g_\theta(\bx)A\,\eta_0(\alpha_\theta(\bx),t),\\
\eta_0(\alpha,t)&=\frac{1-e^{-\alpha t}}{\alpha},\qquad \eta_0(0,t)=t,
\end{aligned}
\label{eq:duhamel-feature}
\end{equation}
This is the evaluation form used in assembly; Section~\ref{subsec:closed-form} records its equivalent steady-state representation.
The same primitive, sampling law, analytic derivative chain, residual assembly, and least-squares readout are used by both L-RFM-Local and L-RFM-Global; the two variants differ in the spatial support assigned to this primitive.

The frozen parameters are drawn from
\begin{equation}
\begin{gathered}
\log_{10}\tau\sim\cU\bigl[\log_{10}\tau_{\min},\log_{10}\tau_{\max}\bigr],\qquad
\bw,\bw^0\sim\cU[-R_w,R_w]^d,\\
b,b^0\sim\cU[-R_b,R_b],\qquad A\sim\cU[-1,1],
\end{gathered}
\label{eq:sampling}
\end{equation}
with $\tau_{\min}=c_{\min}T_{\rm win}$ and $\tau_{\max}=c_{\max}T_{\rm win}$, where $T_{\rm win}=T$ for a whole-window solve and $T_{\rm win}$ is the current block length in block-marching runs.
The constants $(c_{\min},c_{\max},R_w,R_b)$ are fixed trial-space design hyperparameters that remain unchanged during the least-squares solve and are specified by the benchmark configurations.
The log-uniform law for $\tau$ is the design choice that places relaxation directions across multiple temporal scales.

For either variant, let $\{\Phi_q\}_{q=1}^P$ denote the final set of trial functions after the spatial support convention has been chosen.
The numerical solution has the common linear readout form
\begin{equation}
u_{\lr}(\bx,t)=\sum_{q=1}^{P}c_q\Phi_q(\bx,t).
\label{eq:unified-ansatz}
\end{equation}
Thus L-RFM-Local and L-RFM-Global differ in the spatial support convention used to define $\Phi_q$.
Here, ``local'' refers to $\Phi_q=\psi_k\phi_{i,k}$ with compact patch support and $P=KM$, whereas ``global'' refers to $\Phi_q=\phi_i$ with full spatial support and one global readout.

\subsubsection{L-RFM-Local: PoU-localized liquid trial space}
For the local variant, the domain is covered by $K$ overlapping subdomains $\{\Omega_k\}_{k=1}^K$ with centers $\bx_k^c$.
Each patch has a diagonal scaling
\[
D_k=\mathrm{diag}(r_{k,1},\dots,r_{k,d}),
\]
and local coordinate $\bz_k(\bx)=D_k^{-1}(\bx-\bx_k^c)$.
A smooth partition of unity (PoU) $\{\psi_k\}\subset C^\infty(\overline{\Omega})$ satisfies $\psi_k\ge 0$, $\mathrm{supp}\,\psi_k\subset\Omega_k$, and $\sum_k\psi_k\equiv 1$ on $\overline{\Omega}$ \cite{chen2022rfm}.
The PoU bumps in this paper are built from the standard $C^\infty$ compact-support bump
\begin{equation}
b(r)=
\begin{dcases}
\exp\!\Bigl(-\frac{1}{1-r^2}\Bigr), & |r|<1,\\
0, & |r|\ge 1,
\end{dcases}
\label{eq:pou-bump}
\end{equation}
which is smooth across $|r|=1$ together with all its derivatives.
For $d$-dimensional tensor-product patches we set
\[
\tilde\psi_k(\bx)=\prod_{j=1}^d b\bigl(z_{k,j}(\bx)\bigr),
\]
with the one-dimensional case understood as the scalar product.
Provided the cover $\{\Omega_k\}$ is chosen so that every $\bx\in\overline{\Omega}$ lies strictly inside at least one $\Omega_k$, the sum $\sum_\ell\tilde\psi_\ell$ is positive on $\overline{\Omega}$, and the normalization $\psi_k=\tilde\psi_k/\sum_\ell\tilde\psi_\ell$ enforces $\sum_k\psi_k\equiv 1$.
Because $b$ and all its derivatives extend by zero across $|r|=1$, each $\psi_k$ is globally $C^\infty(\overline{\Omega})$, and the spatial derivatives of $\psi_k$ needed by the residual operator are obtained from \eqref{eq:pou-bump} by elementary calculus.

On each patch, L-RFM-Local samples $M$ independent parameter tuples $\theta_{i,k}$ from \eqref{eq:sampling} and instantiates the shared primitive with the local coordinate $\zeta(\bx)=\bz_k(\bx)$.
We write $\phi_{i,k}$, $g_{i,k}$, $h^0_{i,k}$, and $\alpha_{i,k}$ for the corresponding local instantiations of \eqref{eq:gh0}--\eqref{eq:ltc-feature}.
The readout dimension is $P=KM$ and the feature index is $q=(k,i)$.
The localized trial functions are
\begin{equation}
\Phi_q(\bx,t)=\Phi^{\mathrm{loc}}_{i,k}(\bx,t)=\psi_k(\bx)\phi_{i,k}(\bx,t),
\end{equation}
so that
\begin{equation}
u_{\mathrm{L\text{-}RFM\text{-}Local}}(\bx,t)
=\sum_{k=1}^K\psi_k(\bx)\sum_{i=1}^M c_{i,k}\,\phi_{i,k}(\bx,t),
\label{eq:lrfm-ansatz}
\end{equation}
with trainable readout vector $\bc=(c_{i,k})\in\R^{KM}$.
The trial functions $\psi_k\phi_{i,k}$ are spatially local because the PoU factors vanish outside their patches.

\subsubsection{L-RFM-Global: global liquid trial space}
For the global variant, the PoU localization is removed and one global affine normalization $\zeta(\bx)=\bz(\bx)$ is used over the whole spatial domain.
We sample $P$ independent copies $\theta_i$ of the same shared liquid primitive, set $q=i$, and define
\[
\Phi_q(\bx,t)=\Phi^{\mathrm{glob}}_i(\bx,t)=\phi_i(\bx,t).
\]
We write $\phi_i$ for this global instantiation below.
Thus
\begin{equation}
u_{\mathrm{L\text{-}RFM\text{-}Global}}(\bx,t)
=\sum_{i=1}^{P}c_i\,\phi_i(\bx,t),
\label{eq:lrfm-global-ansatz}
\end{equation}
with a single global readout vector $\bc\in\R^P$.
The global trial functions have full spatial support and otherwise use the same liquid primitive, parameter law, derivative formulas, and LS readout as the local construction.
Only the readout coefficients are fitted in either variant; the sampled parameters in \eqref{eq:gh0}--\eqref{eq:ltc-feature} remain fixed.
Matched Local/Global comparisons use the same total readout dimension $P$ and therefore isolate the effect of spatial support.

\subsection{Stable closed form and analytic derivatives}
\label{subsec:closed-form}

For fixed $\bx$, the shared primitive \eqref{eq:ltc-feature} is a linear first-order ODE in $t$ with $t$-independent coefficients.
The formulas below are written for a local feature $(i,k)$; the global case is obtained by dropping the patch index and replacing $\bz_k$ with the global coordinate $\bz$.
We evaluate the Duhamel form \eqref{eq:duhamel-feature} because it has a removable limit at $\alpha_{i,k}=0$.
When $\alpha_{i,k}(\bx)\ne0$, the same expression can be rearranged into the steady-state form
\begin{equation}
\phi_{i,k}(\bx,t)=s_{i,k}(\bx)+\bigl(h^0_{i,k}(\bx)-s_{i,k}(\bx)\bigr)\,E_{i,k}(\bx,t),
\label{eq:closed-form}
\end{equation}
where
\[
s_{i,k}=\frac{g_{i,k}A_{i,k}}{\alpha_{i,k}},\qquad
E_{i,k}=\exp\!\bigl(-\alpha_{i,k}(\bx)t\bigr).
\]
At $\alpha_{i,k}=0$, the Duhamel form \eqref{eq:duhamel-feature} gives the finite limit $\phi_{i,k}=h^0_{i,k}+g_{i,k}A_{i,k}t$.
In this representation, $s_{i,k}(\bx)$ is the constant-in-$t$ offset and the $t\to\infty$ asymptote when $\alpha_{i,k}>0$.
The function $h^0_{i,k}(\bx)$ is the initial trace, and $h^0_{i,k}-s_{i,k}$ is the coefficient of the exponential factor $E_{i,k}$ whose rate depends on $\bx$.
The two ridge functions \eqref{eq:gh0} therefore set independent spatial shapes for the asymptote and the initial state, while the sampled $\tau_{i,k}$ controls the relaxation time scale.

Since $|g_{i,k}(\bx)|\le 1$ and $\tau_{i,k}\in[\tau_{\min},\tau_{\max}]$, the rate $\alpha_{i,k}(\bx)$ lies in the bounded interval $[\tau_{\max}^{-1}-1,\,\tau_{\min}^{-1}+1]$.
On each finite solve window this gives $|\alpha_{i,k}(\bx)|\le\tau_{\min}^{-1}+1$ uniformly, so $|E_{i,k}(\bx,t)|\le\exp(|\alpha_{i,k}|T_{\rm win})$ stays bounded.
Features drawn with small $\tau$ act as fast relaxation kernels, whereas those with large $\tau$ and negative $g$ provide bounded finite-window growing components.
The trial space therefore contains both decaying and growing temporal directions, and the SVD-regularized least-squares readout selects the linear combination that fits the residual.

The same removable-limit convention is used for derivatives.
Let
\[
\eta_m(\alpha,t)=\partial_\alpha^m\eta_0(\alpha,t),\qquad m\ge1,
\]
with limits evaluated by Taylor expansion when $|\alpha|$ is small; for example $\eta_0(0,t)=t$, $\eta_1(0,t)=-t^2/2$, and $\eta_2(0,t)=t^3/3$.
Writing $h=h^0_{i,k}$, $g=g_{i,k}$, $\alpha=\alpha_{i,k}$, $E=e^{-\alpha t}$, and using subscripts $j$ and $jj$ for first and second derivatives in $x_j$, differentiating the Duhamel form gives the stable first derivative
\begin{equation}
\begin{aligned}
\partial_{x_j}\phi_{i,k}
= h_jE-h\alpha_j tE
+ A_{i,k}\bigl(g_j\eta_0+g\alpha_j\eta_1\bigr),
\end{aligned}
\label{eq:dxphi}
\end{equation}
where all $\eta_m$ are evaluated at $(\alpha,t)$ and $\alpha_j=g_j$.
Elementary calculus gives $g_j=(1-g^2)w_{i,k,j}/r_{k,j}$ and $h_j=(1-h^2)w^0_{i,k,j}/r_{k,j}$ for a local tensor-product patch, with the corresponding affine scaling used in the global coordinate.
The second derivative in the same direction is
\begin{equation}
\begin{aligned}
\partial_{x_j}^2\phi_{i,k}
=\;&h_{jj}E-2h_j\alpha_j tE
+h\bigl(\alpha_j^2t^2-\alpha_{jj}t\bigr)E\\
&+A_{i,k}\Bigl[
g_{jj}\eta_0+2g_j\alpha_j\eta_1
+g\bigl(\alpha_{jj}\eta_1+\alpha_j^2\eta_2\bigr)
\Bigr],
\end{aligned}
\label{eq:dxxphi}
\end{equation}
with $\alpha_{jj}=g_{jj}$.
Higher-order derivatives, including the third derivative used in KdV-type residuals, are obtained by the same chain and product rules applied to the Duhamel representation.
The transcendental quantities in the assembly are limited to $\tanh$, $\exp$, and the removable-limit functions $\eta_m$.

In multiple dimensions, the diagonal Hessian entries are obtained from \eqref{eq:dxxphi}, and the Laplacian is their trace.
Mixed derivatives, when required by a PDE residual, are obtained by applying the same chain rule to different coordinate directions.
The PoU-localized trial function $\psi_k(\bx)\phi_{i,k}(\bx,t)$ has space derivatives given by Leibniz; in particular,
\begin{equation}
\begin{aligned}
\nabla_\bx\!\bigl(\psi_k\phi_{i,k}\bigr)&=(\nabla_\bx\psi_k)\phi_{i,k}+\psi_k\nabla_\bx\phi_{i,k},\\
\Delta_\bx\!\bigl(\psi_k\phi_{i,k}\bigr)&=(\Delta_\bx\psi_k)\phi_{i,k}+2\,\nabla_\bx\psi_k\!\cdot\!\nabla_\bx\phi_{i,k}+\psi_k\Delta_\bx\phi_{i,k},
\end{aligned}
\label{eq:leibniz-pou}
\end{equation}
and the temporal derivative commutes with multiplication by $\psi_k$.
Because the bump \eqref{eq:pou-bump} has compact support inside its patch, $\psi_k\phi_{i,k}$ vanishes outside $\Omega_k$, and the assembled matrix $\bA$ in \eqref{eq:lin-system} has a natural block-sparse structure in the patch index $k$ whenever the collocation points are clustered by patch.
The assembly below is stated in dense form, while this patch-index sparsity provides an implementation-level reduction in constants.
L-RFM-Global omits the Leibniz factors involving $\psi_k$ and uses the same formulas for $\phi_i$ directly.
Thus residual assembly requires neither a numerical ODE solve nor automatic differentiation through an ODE integrator.

Complex-valued equations use the same trial functions in real-block form.
Writing $u=u_R+iu_I$,
\begin{equation}
u_R=\sum_q c_q^R\Phi_q(\bx,t),\qquad
u_I=\sum_q c_q^I\Phi_q(\bx,t),
\label{eq:complex-ansatz}
\end{equation}
and the coupled real system is assembled with the same derivative formulas; nonlinear terms are handled by the Picard linearization described below.

\subsection{Weighted residual least-squares solver and block marching}
\label{subsec:solver}

This subsection turns the fixed L-RFM trial space into an algebraic residual system.
Once the features have been sampled, the readout coefficients are the unknowns.
Linear problems require one weighted least-squares solve, while nonlinear problems and long windows use Picard linearization and block marching around the same linear algebraic core.

On a current solve window of length $T_{\rm win}$, choose interior, initial, and boundary collocation sets
\[
\mathcal{X}_{\mathrm{int}}\subset\Omega\times(0,T_{\rm win}],\qquad
\mathcal{X}_{\mathrm{IC}}\subset\overline{\Omega}\times\{0\},\qquad
\mathcal{X}_{\mathrm{BC}}\subset\partial\Omega\times(0,T_{\rm win}],
\]
with cardinalities $N_\mathrm{int}$, $N_\mathrm{IC}$, and $N_\mathrm{BC}$.
For a whole-window solve, $T_{\rm win}=T$; for a marching block, $t\in[0,T_{\rm win}]$ is the local block time and data are evaluated at the corresponding physical time.
In the experiments, the interior points are drawn by Latin hypercube sampling, while the initial and boundary sets use tensor-product grids; the construction requires collocation sets that define an overdetermined residual system.
For linear $\cN$, every row of the residual is linear in $\bc$.
At an interior point, the $q$th column entry is obtained by applying $\partial_t+\cN$ to $\Phi_q$ and evaluating at that point, while the right-hand side is the forcing value.
Initial and boundary rows are assembled by evaluating $\Phi_q$ and $\cB[\Phi_q]$, respectively.
The weighted stacked system is
\begin{equation}
\begin{aligned}
\bA\bc&\approx\bb,\\
\bA&=\bigl[\lambda_\mathrm{int}\bA^\mathrm{int};\,
\lambda_\mathrm{IC}\bA^\mathrm{IC};\,
\lambda_\mathrm{BC}\bA^\mathrm{BC}\bigr],\\
\bb&=\bigl[\lambda_\mathrm{int}\bb^\mathrm{int};\,
\lambda_\mathrm{IC}\bb^\mathrm{IC};\,
\lambda_\mathrm{BC}\bb^\mathrm{BC}\bigr],
\end{aligned}
\label{eq:lin-system}
\end{equation}
The readout vector is computed from the weighted least-squares problem
\begin{equation}
\bc=\underset{\mathbf{d}\in\R^P}{\operatorname{arg\,min}}\,
\|\bA\mathbf{d}-\bb\|_2,
\label{eq:weighted-ls}
\end{equation}
using a truncated-SVD routine.
We use the block weights
\[
\lambda_\mathrm{int}=1,\qquad
\lambda_\mathrm{IC}=\sqrt{N_\mathrm{int}/N_\mathrm{IC}},\qquad
\lambda_\mathrm{BC}=\sqrt{N_\mathrm{int}/N_\mathrm{BC}},
\]
omitting any empty block.
These weights balance the aggregate contribution of interior, initial, and boundary blocks when their row counts differ and are kept fixed across the benchmark configurations.
After row weighting, singular values of $\bA$ below $\sigma_{\max}\sigma_{\mathrm{cut}}$ are discarded, with $\sigma_{\mathrm{cut}}=10^{-12}$ unless otherwise stated.

For nonlinear problems, Picard iteration freezes the previous iterate in the coefficients or source terms.
At step $\ell$, $\cN[u]$ is replaced by a problem-specific operator $\cN_{\rm lin}[u^{(\ell)};u^{(\ell-1)}]$ that is linear in the current readout.
The residual solved at step $\ell$ is therefore
\begin{equation}
R^{(\ell)}(\bc;\bx,t)
=\partial_t u^{(\ell)}(\bx,t)
+\cN_{\rm lin}\!\bigl[u^{(\ell)};u^{(\ell-1)}\bigr](\bx,t)
-f(\bx,t),
\label{eq:picard-residual}
\end{equation}
in the same weighted block form as \eqref{eq:lin-system}.
For example, Burgers-type products freeze one factor at $u^{(\ell-1)}$ while keeping the other factor linear in $u^{(\ell)}$.
The iteration stops when the relative update $\|u^{(\ell)}-u^{(\ell-1)}\|/\|u^{(\ell)}\|$ falls below a prescribed tolerance $\delta$ or when the maximum Picard count is reached.
The same residual assembly and stopping rule are used across all nonlinear experiments.

For long time horizons, $[0,T]$ is split into $B$ blocks $[T_b,T_{b+1}]$; the case $B=1$ is the standard whole-window solve.
On block $b$, $T_{\rm win}=T_{b+1}-T_b$, a new feature family is sampled on the local time interval, and the terminal trace from the previous block is used as the next initial condition.
Block marching extends the same residual least-squares construction to long or strongly nonlinear intervals by restarting the local time coordinate and passing solution traces between consecutive blocks.

\subsection{Algorithmic summary and computational cost}
\label{subsec:algorithm}

Algorithm~\ref{alg:lrfm} summarizes the solver workflow.
When $B=1$, the marching loop reduces to a single whole-window solve.

\begin{algorithm}[!htbp]
\small
\DontPrintSemicolon
\KwIn{IBVP data; variant $\nu\in\{\mathrm{Local},\mathrm{Global}\}$; feature counts $M$ per patch for Local or $P$ global features for Global; PoU data if local; block partition; collocation sets and weights; Picard parameters $(\delta,\{L_b\}_{b=0}^{B-1})$; SVD cutoff.}
\KwOut{Approximate solution $u_{\lr}$ on $\overline{\Omega}\times[0,T]$.}
\For{$b=0,\dots,B-1$ on block $[T_b,T_{b+1}]$}{
  Set $T_{\rm win}=T_{b+1}-T_b$ and use local block time $t_{\rm loc}\in[0,T_{\rm win}]$\;
  Set the block initial trace from $u_0$ or from the previous block\;
  \If{$\nu=\mathrm{Local}$}{
    Sample $M$ primitives on each patch and set $\Phi_q=\psi_k\phi_{i,k}$, $q=(k,i)$\;
  }
  \ElseIf{$\nu=\mathrm{Global}$}{
    Sample $P$ primitives on the whole domain and set $\Phi_q=\phi_q$\;
  }
  Evaluate $\Phi_q$ and the residual derivatives; apply the PoU product rule if local\;
  Set $L_b=1$ for a linear problem and use the prescribed maximum otherwise\;
  Initialize $u^{(0)}$ from the prescribed Picard initial guess\;
  \For{$\ell=1,\dots,L_b$}{
    Assemble the weighted residual system \eqref{eq:lin-system}, using Picard linearization when needed\;
    Solve \eqref{eq:weighted-ls} by truncated SVD to obtain $\bc^{(\ell)}$\;
    Form the readout solution, using the real-block form when needed\;
    \If{the relative update is below $\delta$}{\textbf{break}\;}
  }
  Record $u_{\lr}^{(b)}\leftarrow u^{(\ell)}$ on the current block\;
}
\Return{$u_{\lr}$ obtained by concatenating all block solutions}\;
\caption{Weighted residual least-squares algorithm for the L-RFM variants.}
\label{alg:lrfm}
\end{algorithm}

Let $N=N_\mathrm{int}+N_\mathrm{IC}+N_\mathrm{BC}$ and let $P$ be the total readout dimension on one solve window.
For L-RFM-Local, $P=KM$; for L-RFM-Global, $P$ is the number of global liquid features.
Feature evaluation and matrix assembly cost $O(NP)$ work per Picard step.
For a dense overdetermined system with $N\gtrsim P$, the SVD-based LS solve costs $O(NP^2+P^3)$; if $P>N$, the corresponding dense cost is $O(N^2P+N^3)$.
The benchmark solves use overdetermined systems, so the former regime is the relevant one.
With $L=\max_b L_b$, the dense LS phase for $B$ marching blocks scales as $O(BL\,(NP^2+P^3))$ in the constant-size case.
If $N$, $P$, or the Picard count varies across blocks, the cost is the corresponding sum over blocks.
These asymptotic LS scalings match localized ST-RFM at matched $(K,M,N)$ and global methods at matched total $P$.
Liquid kernels increase the feature-evaluation constant through the evaluation of the closed-form responses and their derivatives, while the LS order is unchanged.
The stated costs use dense assembly; patch sparsity reduces implementation constants and is separate from the asymptotic comparison.
Memory is dominated by the dense residual matrix at $O(NP)$ entries.

\section{Trial-Space Expressivity and Temporal-Scale Representation}
\label{sec:theory}

For evolutionary PDEs, the expressivity question for a fixed-feature space--time solver has both an infinite- and a finite-dimensional side.
At the level of trial-space closure, the relevant issue is whether the local and global L-RFM families retain the approximation scope expected of random-feature collocation.
At the level of an assembled finite system, the relevant issue is whether sampled relaxation scales contribute distinct temporal directions rather than merely reparameterizing static activations.
This section addresses these two points.
We first prove density of the deterministic trial spaces through a separable ridge--exponential subfamily contained in the closure of the liquid feature family.
We then use a reduced temporal collocation model to show that distinct relaxation rates generate independent temporal columns at finite feature count.
Together, these results provide the theoretical basis for interpreting the finite-feature and ablation studies in Section~\ref{sec:experiments}.

\subsection{Density through a separable liquid subfamily}
\label{subsec:density-thm}

Let $Q=\overline{\Omega}\times[0,T]$ with $\Omega\subset\R^d$ bounded.
We consider the deterministic spans generated by the L-RFM features:
\[
\cA_{\mathrm{loc}}=\bigcup_{M\ge 1}\Bigl\{\sum_{k=1}^K\sum_{i=1}^M c_{i,k}\,\psi_k(\bx)\,\phi_{\theta_{i,k},k}(\bx,t):c_{i,k}\in\R,\,\theta_{i,k}\in\Theta_k\Bigr\},
\]
and
\[
\cA_{\mathrm{glob}}=\bigcup_{P\ge 1}\Bigl\{\sum_{i=1}^P c_i\,\phi_{\theta_i}(\bx,t):c_i\in\R,\,\theta_i\in\Theta_{\mathrm{glob}}\Bigr\},
\]
where $\Theta_k$ and $\Theta_{\mathrm{glob}}$ denote the admissible parameter sets for the local and global feature families.
The localized space uses the PoU factors and local affine coordinates; the global space uses the same closed-form liquid primitive with one global affine normalization and no PoU factor.

\begin{assumption}[Patch geometry and admissible parameters]
\label{assump:setting}
\hfill
\begin{enumerate}[label=(A\arabic*)]
\item\label{a-pou} $\{\psi_k\}_{k=1}^K\subset C(\overline{\Omega})$ is a continuous partition of unity, $\psi_k\ge 0$, $\sum_k\psi_k\equiv 1$, and each patch support $X_k=\mathrm{supp}\,\psi_k$ is compact.
On each patch, $\bz_k(\bx)=D_k^{-1}(\bx-\bx_k^c)$ with nonsingular diagonal $D_k$.
\item\label{a-tau} There exist $0<\tau_{\min}<\tau_{\max}<\infty$ and a fixed amplitude $A_0$ such that the closure of each $\Theta_k$ contains the slice
\begin{equation}
\begin{aligned}
(\tau,A_0,\bw,b,\bw^0,b^0)\quad\text{with}\quad
&\tau\in[\tau_{\min},\tau_{\max}],\quad \bw=\bzero,\quad b=0,\\
&(\bw^0,b^0)\in\R^d\times\R .
\end{aligned}
\label{eq:admissible-subset}
\end{equation}
The value of $A_0$ is immaterial on this slice because $g\equiv0$.
\item\label{a-global} The closure of $\Theta_{\mathrm{glob}}$ contains the same parameter slice as in \ref{a-tau}, with the global affine coordinate normalization in place of $\bz_k$.
\end{enumerate}
\end{assumption}

These assumptions are stated for the closures of admissible deterministic parameter sets.
It is sufficient for the separable slice to lie in the parameter-set closure, since the feature map is continuous in its parameters.
For random feature draws, the corresponding almost-sure statement follows when the closures of the sampling-law supports contain the same slice.
The bounded sampling ranges used in the computations are finite-dimensional realizations of this admissible family; for any fixed target and tolerance, the density argument requires only finitely many parameters from the slice, and sufficiently broad bounded ranges contain those approximants.

The key observation is that the closure of the liquid feature family contains a separable subfamily.
On the parameter slice $\bw=\bzero$ and $b=0$,
\begin{equation}
g_{i,k}\equiv0,\qquad s_{i,k}\equiv0,\qquad E_{i,k}(\bx,t)=e^{-t/\tau_{i,k}},
\label{eq:subdict-slice}
\end{equation}
and the closed-form feature reduces to
\begin{equation}
\phi_{\theta,k}(\bx,t)\Big|_{\bw=\bzero,b=0}
=\tanh\!\bigl((\bw^0)^\top\bz_k(\bx)+b^0\bigr)\,e^{-t/\tau}.
\label{eq:subdict}
\end{equation}
Thus the density argument is carried by a separable ridge--exponential component of the liquid feature family.

\begin{theorem}[Trial-space density through a separable subfamily]
\label{thm:lrfm-universal}
Under Assumption~\ref{assump:setting}, both deterministic L-RFM trial spaces $\cA_{\mathrm{glob}}$ and $\cA_{\mathrm{loc}}$ are dense in $C(Q)$ in the supremum norm, and consequently in $L^p(Q)$ for every $1\le p<\infty$.
\end{theorem}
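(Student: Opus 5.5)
We reduce the statement to the separable subfamily \eqref{eq:subdict}. Because the feature map $\theta\mapsto\phi_{\theta,k}$ is continuous into $C(Q)$ on compact parameter neighbourhoods, the uniform closure of $\cA_{\mathrm{glob}}$ (resp.\ $\cA_{\mathrm{loc}}$) contains every finite linear combination of functions $\tanh((\bw^0)^\top\zeta(\bx)+b^0)\,e^{-t/\tau}$ (resp.\ $\psi_k(\bx)$ times such functions with $\zeta=\bz_k$), with parameters taken from the slice in Assumption~\ref{assump:setting}. It therefore suffices to show that the spans
\[
\mathcal{S}=\mathrm{span}\bigl\{\tanh(\bw^\top\zeta(\bx)+b)\,e^{-t/\tau}:\ \bw\in\R^d,\ b\in\R,\ \tau\in[\tau_{\min},\tau_{\max}]\bigr\}
\]
are dense in $C(Q)$ for an affine chart $\zeta$, and then to pass to the local space using the PoU.

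\textbf{Step 1 (tensor structure).} The set $\mathcal{S}$ contains $\mathcal{S}_x\otimes\mathcal{S}_t$, the products $a(\bx)\beta(t)$ with $a\in\mathcal{S}_x=\mathrm{span}\{\tanh(\bw^\top\zeta+b)\}$ and $\beta\in\mathcal{S}_t=\mathrm{span}\{e^{-\lambda t}:\lambda\in[\tau_{\max}^{-1},\tau_{\min}^{-1}]\}$. Since $\zeta$ is an affine bijection, the universal approximation theorem for a non-polynomial continuous activation (Cybenko, Leshno--Lin--Pinkus--Schocken) gives density of $\mathcal{S}_x$ in $C(\overline{\Omega})$.

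\textbf{Step 2 (temporal density), the main obstacle.} The exponents range only over a compact interval, so the standard algebra argument fails: $\mathcal{S}_t$ is not closed under multiplication, because $e^{-\lambda_1 t}e^{-\lambda_2 t}$ may leave the range. Two arguments are available.
\begin{itemize}[label={}]
\item \emph{Divided differences.} Fix $\lambda_0$ in the interior of $[\tau_{\max}^{-1},\tau_{\min}^{-1}]$. Finite differences in $\lambda$ converge uniformly on $[0,T]$ to $\partial_\lambda^n e^{-\lambda t}|_{\lambda_0}=(-t)^n e^{-\lambda_0 t}$. Hence $\overline{\mathcal{S}_t}\supset\{p(t)e^{-\lambda_0 t}\}$ for every polynomial $p$. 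Given $f\in C[0,T]$, apply Weierstrass to $f e^{\lambda_0 t}$, which is continuous because $e^{\lambda_0 t}$ is bounded on $[0,T]$.
\item \emph{M\"untz--Sz\'asz / analyticity.} A finite measure on $[0,T]$ annihilating $\mathcal{S}_t$ has a Laplace transform that is entire and vanishes on an interval, so it vanishes identically and the measure is zero. Hahn--Banach then gives density.
\end{itemize}
We will present the divided-difference argument and check that it needs only uniform convergence on the compact set $[0,T]$.

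\textbf{Step 3 (products and localization).} By Stone--Weierstrass, or simply density of finite sums $\sum a_j(\bx)\beta_j(t)$ in $C(\overline{\Omega}\times[0,T])$ (e.g.\ via polynomials), $\mathcal{S}_x\otimes\mathcal{S}_t$ is dense. Uniform approximation of each factor on a compact set controls each product, since the factors stay bounded. This proves density of $\cA_{\mathrm{glob}}$.

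For $\cA_{\mathrm{loc}}$, take $f\in C(Q)$ and $\varepsilon>0$. Apply the global result on each compact patch $X_k\times[0,T]$ with chart $\bz_k$, where $D_k$ is nonsingular so $\bz_k$ is an affine bijection, to obtain $v_k$ with $\|f-v_k\|_{C(X_k\times[0,T])}<\varepsilon$. Then $\psi_k\ge0$ and $\sum_k\psi_k\equiv1$ give
\[
\Bigl|f-\sum_k\psi_k v_k\Bigr|\le\sum_k\psi_k|f-v_k|<\varepsilon .
\]
Finally, $C(Q)$ is dense in $L^p(Q)$ for $1\le p<\infty$ and $\|\cdot\|_{L^p}\le|Q|^{1/p}\|\cdot\|_\infty$, since $Q$ is compact with finite measure. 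This gives the $L^p$ statement.
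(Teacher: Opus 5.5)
Your proof is correct and follows the paper's architecture. You reduce to the separable slice \eqref{eq:subdict} via continuity of the feature map in its parameters, use ridge density in space and Stone--Weierstrass for finite tensor sums, and glue the patch approximants with the partition of unity.

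The one genuine difference is the temporal density lemma.
\begin{itemize}
\item The paper argues on the dual side. A measure annihilating $\{e^{-\alpha t}\}$ has an entire Laplace transform that vanishes on an interval, hence everywhere. Differentiating at some $\alpha_\star$ gives vanishing moments $\int t^m e^{-\alpha_\star t}\,d\mu$, and Weierstrass forces $\mu=0$.
\item You argue on the primal side. Divided differences in $\lambda$ converge uniformly on $[0,T]$, since $|\partial_\lambda^{n+1}e^{-\lambda t}|\le T^{n+1}$. So $t^n e^{-\lambda_0 t}$ lies in the closure, and Weierstrass applied to $f e^{\lambda_0 t}$ finishes.
\end{itemize}
Both rest on the same fact: $\lambda$-derivatives of $e^{-\lambda t}$ at a single rate generate all polynomial multiples of $e^{-\lambda_0 t}$. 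Your version is constructive and needs neither Hahn--Banach, Riesz representation, nor analytic continuation. It also shows that the paper's identity-theorem step is superfluous: $F$ vanishing on $[\alpha_0,\alpha_1]$ already kills all its derivatives at an interior point. The duality version is simply shorter to write down.

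Your $L^p$ step also states explicitly that $C(Q)$ is dense in $L^p(Q)$. The paper's closing remark (uniform convergence implies $L^p$ convergence) needs that fact but leaves it implicit.
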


\begin{proof}[Proof outline]
The slice \eqref{eq:subdict} gives products of spatial ridge functions and temporal exponentials.
Classical sigmoid universal approximation gives density of the ridge span on every compact patch, while a Laplace-transform argument gives density of finite exponential sums $\sum_i c_i e^{-\alpha_i t}$ on $[0,T]$ for $\alpha_i$ in any compact positive interval.
Stone--Weierstrass then gives density of the tensor-product span on each $X_k\times[0,T]$ and, for the global trial space, on $\overline{\Omega}\times[0,T]$.
For the localized trial space, approximating the target on each $X_k\times[0,T]$ to accuracy $\varepsilon$ gives a global error bounded by $\sum_k\psi_k\varepsilon=\varepsilon$ because $\psi_k\ge0$ and $\sum_k\psi_k=1$.
The full proof is given in \ref{app:density-proof}.
\end{proof}

Theorem~\ref{thm:lrfm-universal} establishes the approximation scope of the deterministic L-RFM trial spaces.
The argument uses a separable ridge--exponential subfamily, so the role of sampled relaxation scales at finite feature count is examined separately below.

\subsection{Finite-feature temporal rank calculation}
\label{subsec:finite-scale}

To isolate the effect of the $\tau$ law, we consider a reduced temporal collocation model with spatial dependence and the PDE residual operator removed.
This model captures the algebraic contribution of multiple relaxation rates before the full residual system is assembled.

\begin{proposition}[Linear independence of sampled relaxation columns]
\label{prop:scale-identifiability}
Let $0\le t_1<\cdots<t_N\le T$ and let $\lambda_j=1/\tau_j>0$ be $J\le N$ distinct relaxation rates.
Define the temporal feature matrix $V\in\R^{N\times J}$ by
\[
V_{n j}=e^{-\lambda_j t_n}.
\]
Then $V$ has full column rank.
Consequently, any grid function of the form
\[
y_n=\sum_{j=1}^J a_j e^{-\lambda_j t_n},\qquad n=1,\ldots,N,
\]
is represented exactly by $J$ exponential features with those rates.
If all features are restricted to a single rate $\lambda_0$, the best one-scale approximation error is
\[
\inf_{c\in\R}\left\|\mathbf y-c\mathbf v_{\lambda_0}\right\|_2,
\qquad
(\mathbf v_{\lambda_0})_n=e^{-\lambda_0 t_n},
\]
which is positive unless $\mathbf y$ is proportional to $\mathbf v_{\lambda_0}$ on the collocation grid.
\end{proposition}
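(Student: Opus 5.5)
The plan is to reduce full column rank of $V$ to the nonvanishing of a generalized Vandermonde determinant on a $J\times J$ subblock. It then suffices to show that no nontrivial exponential sum $p(t)=\sum_{j=1}^J a_j e^{-\lambda_j t}$ can vanish at $J$ distinct points. I will prove the stronger statement that such a $p\not\equiv0$ has at most $J-1$ real zeros, by induction on $J$ using Rolle's theorem (the standard Descartes/Pólya argument for Chebyshev systems).

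For $J=1$, $p(t)=a_1e^{-\lambda_1 t}$ has no zeros when $a_1\ne0$. For the inductive step, order the rates so that $\lambda_1<\cdots<\lambda_J$. Suppose $p$ has $J$ distinct zeros $s_1<\cdots<s_J$. Multiply by $e^{\lambda_1 t}$ to get $q(t)=a_1+\sum_{j\ge2}a_j e^{-(\lambda_j-\lambda_1)t}$, which has the same zeros. Rolle's theorem gives $J-1$ distinct zeros of
\[
q'(t)=-\sum_{j\ge2}a_j(\lambda_j-\lambda_1)e^{-(\lambda_j-\lambda_1)t}.
\]
This is an exponential sum with $J-1$ distinct rates and coefficients $a_j(\lambda_j-\lambda_1)$. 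By induction those coefficients all vanish, so $a_j=0$ for $j\ge2$, and then $a_1=0$, because $q\equiv a_1$ must vanish at $s_1$. This completes the induction.

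Next, I apply this to the first $J$ grid points $t_1<\cdots<t_J$, which exist since $J\le N$. If $Va=0$, the sum $p$ vanishes at those $J$ distinct points, so $a=0$. Hence the columns are independent and $\operatorname{rank}V=J$. The representation claim is then immediate: $\mathbf y=Va$ by definition, and the coefficients $a$ are uniquely determined by full column rank. So the least-squares readout on these $J$ columns recovers $\mathbf y$ exactly.

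For the one-scale statement, I compute the infimum explicitly as the orthogonal projection residual:
\[
\inf_c\|\mathbf y-c\mathbf v_{\lambda_0}\|_2^2=\|\mathbf y\|_2^2-\frac{(\mathbf y^\top\mathbf v_{\lambda_0})^2}{\|\mathbf v_{\lambda_0}\|_2^2}.
\]
This uses $\mathbf v_{\lambda_0}\ne0$, which holds because its entries are positive. By the equality case of Cauchy--Schwarz, the residual is zero exactly when $\mathbf y$ is proportional to $\mathbf v_{\lambda_0}$, and positive otherwise. As a remark, I would also use the rank result to note a consequence. If $\mathbf y$ genuinely involves at least two distinct rates with nonzero coefficients, then proportionality to $\mathbf v_{\lambda_0}$ is impossible whenever $N$ is at least the number of distinct rates among those of $\mathbf y$ and $\lambda_0$. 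Indeed, applying the independence result to that enlarged set of rates shows that a proportionality relation would force the coefficients to match, which is a contradiction.

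The main obstacle is the inductive zero-counting step: I have to keep the zeros distinct under Rolle's theorem and track that the rescaled rates $\lambda_j-\lambda_1$ stay distinct and positive. Everything else is linear algebra.
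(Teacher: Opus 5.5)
Your proof is correct and follows the same route as the paper. The paper also uses the Chebyshev-system fact that a nontrivial $\sum_j c_j e^{-\lambda_j t}$ has at most $J-1$ zeros, applies it to $N\ge J$ distinct grid points to get full column rank, and treats the single-rate claim as orthogonal projection onto the span of $\mathbf v_{\lambda_0}$; the only difference is that you prove the zero bound by the Rolle induction and state the Cauchy--Schwarz equality case, both of which the paper cites as standard.
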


The proof is the standard Chebyshev-system argument for exponentials and is included in \ref{app:scale-proof}.
Proposition~\ref{prop:scale-identifiability} is an algebraic rank statement for the reduced temporal collocation model.
Conditioning remains governed by the separation of the sampled rates and by the temporal collocation grid: if the rates are clustered or the grid does not resolve their separation, $V$ can be ill-conditioned.
Here ``resolvable'' means producing numerically distinguishable columns on the chosen collocation grid.
In this temporal model, a fixed-$\tau$ ablation removes the independent exponential columns supplied by multiple sampled rates, whereas sampled relaxation scales give a finite-dimensional mechanism for representing multiple temporal modes.
This proposition provides the theoretical companion to the multi-scale temporal benchmark and the $\tau$-law ablation in Section~\ref{subsec:ablations}.

\begin{remark}[Connection to the numerical study]
The two results above serve different roles in the overall argument.
The density theorem shows that the local and global deterministic trial spaces retain the approximation scope expected from random-feature collocation.
The temporal-rank calculation identifies a finite-dimensional mechanism by which sampled relaxation scales supply independent temporal directions.
In the full PDE discretization, this mechanism is coupled with spatial localization, residual weighting, SVD regularization, and Picard linearization.
The numerical study therefore tests how the temporal-scale mechanism appears after the complete residual system has been assembled.

The multi-scale temporal benchmark and $\tau$ ablation probe the relaxation-scale mechanism; the local/global comparisons separate temporal representation from spatial support; and the conditioning tables characterize the assembled least-squares systems used in the experiments.
\end{remark}

\section{Numerical Experiments}
\label{sec:experiments}

This section evaluates L-RFM as a meshless space--time least-squares collocation method for evolutionary PDEs with distinct temporal and spatial scales.
The numerical study is organized around three questions: whether the analytic feature derivatives and PoU assembly are implemented consistently, whether liquid random features improve finite-feature accuracy over static random-feature trial spaces under matched readout dimension, and what evidence isolates the role of temporal-scale representation.
We begin with known-solution verification problems, then report matched-$P$ comparisons against static localized and global baselines, and finally examine temporal-scale tests, conditioning, multidimensional assembly, long-time marching, and wall-clock cost.
The supplementary material in \ref{app:configurations} records the benchmark configurations, reference solvers, row weights, SVD thresholds, and supplementary conditioning results.

\subsection{Numerical protocol and matched-capacity design}
\label{subsec:setup}

Unless otherwise stated, experiments are run in double precision on CPU using a server equipped with an Intel(R) Xeon(R) Platinum 8476C processor.
No GPU acceleration is used.
When timing is reported, wall-clock time includes feature evaluation, matrix assembly, and the dense least-squares solve on this machine unless otherwise stated.
For each solve we record relative $L^2$ and $L^\infty$ errors on a fixed evaluation grid,
\begin{equation}
e_2=\frac{\|u_{\text{num}}-u_\star\|_{\ell^2(\text{grid})}}{\|u_\star\|_{\ell^2(\text{grid})}},\qquad
e_\infty=\frac{\|u_{\text{num}}-u_\star\|_{\ell^\infty(\text{grid})}}{\|u_\star\|_{\ell^\infty(\text{grid})}},
\label{eq:metrics}
\end{equation}
and, where relevant, the LS condition number, effective rank, residual norm, and wall-clock time.
For complex-valued or multi-component systems, norms are taken after real-block concatenation.
Known-solution problems use analytic or prescribed solutions; nonlinear problems use finer-grid Fourier pseudospectral references unless a closed form is available.
Reference resolutions and time-stepping choices are fixed for each benchmark before method comparison.
Condition numbers are singular-value ratios of the row-weighted LS matrix, and effective ranks use the solver's relative SVD threshold.
All numerical experiments are run with five independent random seeds; random features and Latin hypercube collocation points are redrawn independently, and quantitative table entries and error bands report mean $\pm$ sample standard deviation over those five runs.

The proposed solvers are L-RFM-Local and L-RFM-Global.
The static baselines are the localized ST-RFM-SoV/STC trial spaces of \cite{chen2023strfm} and the global PIELM feature family of \cite{dwivedi2020pielm}.
L-RFM-Global is the non-localized member of our frozen-feature family, not the trained hidden-state liquid solver of \cite{sun2024piln}.
All baseline comparisons use a matched-capacity protocol.
Localized methods use $P=KM$ readout coefficients and global methods use the same total number $P$ of global features.
On each benchmark row, the PDE, readout dimension, collocation sets, reference and evaluation grids, row weights, SVD threshold, Picard settings, stopping criteria, and repeat rules are fixed across methods.
Thus the comparisons isolate the change in trial functions, static versus liquid features and local versus global support, with sampling size, solver tolerance, and reference resolution fixed across methods.
Table~\ref{tab:method-hyperparameters} makes this comparison structure explicit: each row changes one trial-space component while keeping the residual assembly, solver settings, and evaluation protocol fixed.

\begin{center}
  \centering
  \captionof{table}{Matched-capacity rules for isolating trial-space effects.}
  \label{tab:method-hyperparameters}
  \TableFont
  \setlength{\tabcolsep}{3pt}
  \begin{tabular}{@{}>{\raggedright\arraybackslash}p{0.24\linewidth}>{\raggedright\arraybackslash}p{0.36\linewidth}>{\raggedright\arraybackslash}p{0.32\linewidth}@{}}
    \toprule
    Comparison & Held fixed & Tested change \\
    \midrule
    Local static vs L-RFM-Local & Same $K$, $M$, total $P=KM$, collocation, row weights, SVD, and nonlinear iteration. & Static space-time ridges are replaced by liquid temporal features at fixed local geometry. \\
    PIELM vs L-RFM-Global & Same global $P$, collocation, row weights, SVD, and evaluation grid. & A static global ridge trial space is replaced by a global liquid trial space. \\
    L-RFM-Local vs L-RFM-Global & Same total $P$ and liquid sampling law, with or without PoU localization. & Spatial localization is separated from the shared liquid temporal response. \\
    Ablation variants & Same PoU, $P$, collocation, row weights, Picard settings, and SVD threshold. & The liquid response is removed or the $\tau$ distribution is collapsed. \\
    \bottomrule
  \end{tabular}
\end{center}

\subsection{Assembly verification and collocation sensitivity}
\label{subsec:verification-convergence}

This subsection verifies implementation and resolution effects before the primary method-comparison experiments.
On exact or prescribed-solution heat problems, we verify that the analytic feature derivatives, row-weighted LS assembly, PoU support, collocation row counts, and tensor-product Laplacian are assembled consistently before nonlinear benchmarks are used to compare methods.
The rows in this subsection check the discretized machinery before liquid and static trial spaces are compared.

\subsubsection{Heat equation in one dimension and collocation sensitivity}

The first verification problem is the one-dimensional heat equation with the exact decay
$u_\star(x,t)=e^{-\pi^2t}\sin(\pi x)$ on $[-1,1]\times[0,0.25]$.
It solves $\partial_t u=\partial_{xx}u$ with homogeneous Dirichlet boundary conditions and initial data $u(x,0)=\sin(\pi x)$.
For this localized sweep, $K$ is the number of PoU patches, $M$ is the number of features per patch, and $P=KM$ is the readout dimension.
Sweeping the PoU count and the per-patch feature count checks the closed-form derivative implementation and LS residual assembly before nonlinear or dispersive effects are introduced.
Across five random seeds, the relative $L^2$ error falls from $(4.43\pm1.44){\times}10^{-1}$ at $P=25$ to $(5.99\pm2.84){\times}10^{-7}$ at $K=4$, $M=100$.
Increasing to $K=8$, $M=100$ gives $(2.01\pm1.75){\times}10^{-5}$ under the same solver protocol, consistent with rank and conditioning becoming limiting factors in the overparameterized LS system.
Figure~\ref{fig:heat-convergence} shows the corresponding PoU and feature-count sweeps.
A separate collocation sensitivity check then fixes $K=4$, $M=100$ and varies the total LS row count from $2P$ to $6P$, thereby testing overdetermination at a fixed trial space.
Table~\ref{tab:collocation-sensitivity} shows that the effective rank remains below $P$ under the common SVD truncation rule, but the $2P$ row setting already gives five-seed mean error $(1.60\pm1.56){\times}10^{-5}$; increasing to $3P$--$6P$ moves the mean error into the range $1.53{\times}10^{-7}$--$5.39{\times}10^{-6}$.
The large raw condition numbers also motivate the SVD-truncated least-squares solve used throughout the paper; the reported ranks are retained numerical ranks under the common truncation threshold.
These results support using overdetermined collocation systems with several rows per readout coefficient in the main experiments.

\begin{center}
\centering
\captionof{table}{Effect of collocation overdetermination on the 1D heat-equation solve. Rank denotes the SVD-retained numerical rank under the common truncation rule.}
\label{tab:collocation-sensitivity}
\TableFont
\setlength{\tabcolsep}{4pt}
\begin{tabular}{@{}ccccccc@{}}
\toprule
\multicolumn{3}{c}{Collocation design} & \multicolumn{2}{c}{Error} & \multicolumn{2}{c}{Linear algebra} \\
\cmidrule(lr){1-3}\cmidrule(lr){4-5}\cmidrule(l){6-7}
$N/P$ & $N$ & $(N_{\mathrm{int}},N_{\mathrm{IC}},N_{\mathrm{BC}})$ & Rel.\ $L^2$ & Rel.\ $L^\infty$ & $\kappa(\mathbf A)$ & Rank \\
\midrule
    $2$ & 800 & $(560,120,60)$ & $(1.60\pm1.56){\times}10^{-5}$ & $(1.80\pm1.21){\times}10^{-5}$ & $(3.06\pm3.07){\times}10^{14}$ & 361--378 \\
    $3$ & 1200 & $(840,180,90)$ & $(5.39\pm10.46){\times}10^{-6}$ & $(7.11\pm13.15){\times}10^{-6}$ & $(1.24\pm1.19){\times}10^{14}$ & 369--385 \\
    $4$ & 1600 & $(1120,240,120)$ & $(2.90\pm2.03){\times}10^{-7}$ & $(4.56\pm2.11){\times}10^{-7}$ & $(8.90\pm8.83){\times}10^{13}$ & 370--387 \\
    $6$ & 2400 & $(1680,360,180)$ & $(1.53\pm1.18){\times}10^{-7}$ & $(2.01\pm1.57){\times}10^{-7}$ & $(6.76\pm6.26){\times}10^{13}$ & 374--391 \\
    \bottomrule
\end{tabular}
\end{center}

\begin{center}
  \centering
  \maybeincludegraphics[width=0.58\linewidth]{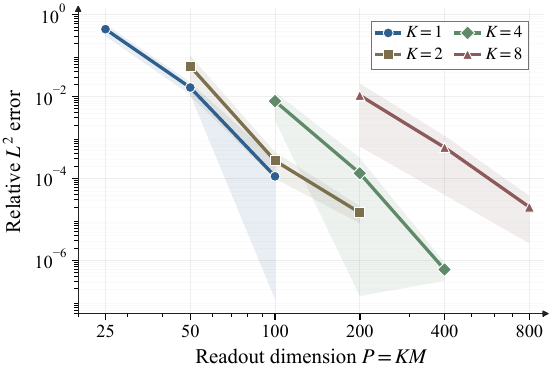}
  \captionof{figure}{Convergence of the 1D heat-equation solve under PoU and feature-count refinement. Curves and bands show mean $\pm$ sample standard deviation over five random seeds.}
  \label{fig:heat-convergence}
\end{center}

\subsubsection{Two-dimensional heat-equation verification}

This test verifies the tensor-product PoU and Laplacian assembly in two spatial dimensions using the prescribed heat solution
\[
u_\star(x_1,x_2,t)=e^{-2\pi^2t}\sin(\pi x_1)\sin(\pi x_2)
\]
on $[0,1]^2\times[0,0.12]$.
It solves $\partial_t u=\Delta u$ with homogeneous Dirichlet boundary conditions and initial data
$u(x_1,x_2,0)=\sin(\pi x_1)\sin(\pi x_2)$.
Here the PoU layout is $2\times2$, so $K=4$ and $P=4M$.
Table~\ref{tab:heat2d-check} shows that increasing the per-patch feature count monotonically reduces both relative $L^2$ and relative $L^\infty$ errors, reaching $(2.57\pm0.20){\times}10^{-3}$ and $(8.20\pm2.13){\times}10^{-3}$, respectively, at $P=240$ over five random seeds.
The monotone decrease confirms that the local-coordinate derivatives, PoU products, and Laplacian terms remain consistent after moving from one to two spatial dimensions.

\begin{center}
\centering
\captionof{table}{Two-dimensional heat-equation verification with tensor-product PoU assembly. Values are reported over five random seeds; the PoU layout is $2\times2$, so $P=4M$.}
\label{tab:heat2d-check}
\small
\setlength{\tabcolsep}{6pt}
\begin{tabular}{@{}cccc@{}}
\toprule
\multicolumn{2}{c}{Capacity} & \multicolumn{2}{c}{Error} \\
\cmidrule(lr){1-2}\cmidrule(l){3-4}
$M$ per patch & $P$ & Rel.\ $L^2$ & Rel.\ $L^\infty$ \\
\midrule
    20 & 80 & $(1.00\pm0.24){\times}10^{-1}$ & $(1.31\pm0.40){\times}10^{-1}$ \\
    30 & 120 & $(4.18\pm0.84){\times}10^{-2}$ & $(8.04\pm1.68){\times}10^{-2}$ \\
    40 & 160 & $(1.38\pm0.45){\times}10^{-2}$ & $(2.61\pm0.78){\times}10^{-2}$ \\
    60 & 240 & $(2.57\pm0.20){\times}10^{-3}$ & $(8.20\pm2.13){\times}10^{-3}$ \\
    \bottomrule
\end{tabular}
\end{center}

\FloatBarrier

\subsection{Representative one-dimensional matched-capacity comparisons}
\label{subsec:local-global-1d}

We next report a representative one-dimensional matched-capacity comparison between the proposed liquid trial spaces and static random-feature baselines.
All rows use the same collocation sets, row weights, SVD truncation threshold, nonlinear iteration settings, and evaluation grid within each problem.
Table~\ref{tab:main-baselines} compares two design choices under the same readout dimension.
The static-to-liquid replacement is assessed by comparing PIELM with L-RFM-Global and ST-RFM-SoV/STC with L-RFM-Local, while the localization choice is assessed by comparing L-RFM-Local and L-RFM-Global at matched $P$.
The four rows probe complementary mechanisms: a sharp reaction--diffusion interface (Allen--Cahn), smooth nonlinear transport (Burgers), a third-derivative dispersive residual (KdV), and complex-valued real-block assembly (NLS).
The benchmark definitions are kept explicit for reproducibility.
Allen--Cahn solves $u_t=\epsilon u_{xx}-u^3+u$ \cite{allen1979microscopic} on $[-1,1]\times[0,0.1]$ with $\epsilon=10^{-4}$, periodic boundary conditions, and $u(x,0)=0.55\sin(\pi x)+0.25\sin(2\pi x)$.
Burgers solves $u_t+uu_x=\nu u_{xx}$ on $[-1,1]\times[0,0.10]$ with $\nu=0.005/\pi$, periodic boundary conditions, and $u(x,0)=-\sin(\pi x)$.
KdV solves $u_t+6uu_x+u_{xxx}=0$ on $[-8,8]\times[0,0.5]$ with periodic boundary conditions and $u(x,0)=\frac{c}{2}\operatorname{sech}^2(\frac{\sqrt{c}}{2}(x-x_0))$, where $c=2$ and $x_0=-3$.
NLS solves $\mathrm{i}\psi_t+\frac{1}{2}\psi_{xx}+|\psi|^2\psi=0$ on $[-8,8]\times[0,0.4]$ with Dirichlet boundary conditions from the bright soliton and $\psi(x,0)=\eta\operatorname{sech}(\eta(x-x_0))\exp(\mathrm{i}v(x-x_0)+\mathrm{i}\phi_0)$, where $\eta=1$, $v=0.5$, $x_0=-2$, and $\phi_0=0$.

\begin{center}
  \centering
  \captionof{table}{Matched-readout comparison on representative one-dimensional PDE benchmarks. Entries are relative $L^2$ errors, mean $\pm$ sample standard deviation over five random seeds. Bold marks the lowest mean.}
  \label{tab:main-baselines}
  \TableFont
  \setlength{\tabcolsep}{2pt}
  \begin{tabular}{@{}lcc@{\hspace{8pt}}ccc@{\hspace{8pt}}cc@{}}
    \toprule
    & & \multicolumn{3}{c}{Static baselines} & \multicolumn{2}{c}{L-RFM variants} \\
    \cmidrule(lr){3-5}\cmidrule(l){6-7}
    Problem & $P$ & ST-RFM-SoV & ST-RFM-STC & PIELM & L-RFM-Local & L-RFM-Global \\
    \midrule
    Allen--Cahn (interface) & 400 & $(4.09\pm1.00){\times}10^{-6}$ & $(1.04\pm0.33){\times}10^{-5}$ & $(3.22\pm1.48){\times}10^{-6}$ & $\boldsymbol{(5.98\pm0.83){\times}10^{-8}}$ & $(3.78\pm1.31){\times}10^{-7}$ \\
    Burgers (smooth global) & 480 & $(7.39\pm2.94){\times}10^{-6}$ & $(4.56\pm0.77){\times}10^{-6}$ & $(1.09\pm0.72){\times}10^{-5}$ & $(3.30\pm1.68){\times}10^{-5}$ & $\boldsymbol{(1.81\pm0.80){\times}10^{-6}}$ \\
    KdV (third derivative) & 480 & $(4.63\pm1.89){\times}10^{-3}$ & $(6.16\pm1.32){\times}10^{-3}$ & $(1.60\pm1.53){\times}10^{-2}$ & $\boldsymbol{(1.61\pm0.61){\times}10^{-3}}$ & $(3.57\pm5.21){\times}10^{-3}$ \\
    NLS (complex) & 960 & $(8.93\pm2.37){\times}10^{-4}$ & $(2.30\pm1.09){\times}10^{-3}$ & $(1.82\pm1.94){\times}10^{-2}$ & $\boldsymbol{(8.64\pm4.51){\times}10^{-5}}$ & $(2.25\pm1.40){\times}10^{-4}$ \\
    \bottomrule
  \end{tabular}
\end{center}

Overall, a liquid trial space gives the lowest mean error in all four rows: L-RFM-Local for Allen--Cahn, KdV, and NLS, and L-RFM-Global for Burgers.
The pattern indicates that the proposed temporal liquid response is useful in both localized and global forms.
For the smooth periodic Burgers case, one global liquid trial space is more efficient than splitting the same readout dimension across patches; for Allen--Cahn, KdV, and NLS, the temporal response and PoU localization act together.
Figure~\ref{fig:main-1d-solutions} visualizes representative reference fields, L-RFM-Local fields, and absolute errors for the four rows in Table~\ref{tab:main-baselines}; the quantitative comparison, including the global Burgers winner, is reported in the table.

\begin{center}
  \centering
  \maybeincludegraphics[width=\FigFieldWidth]{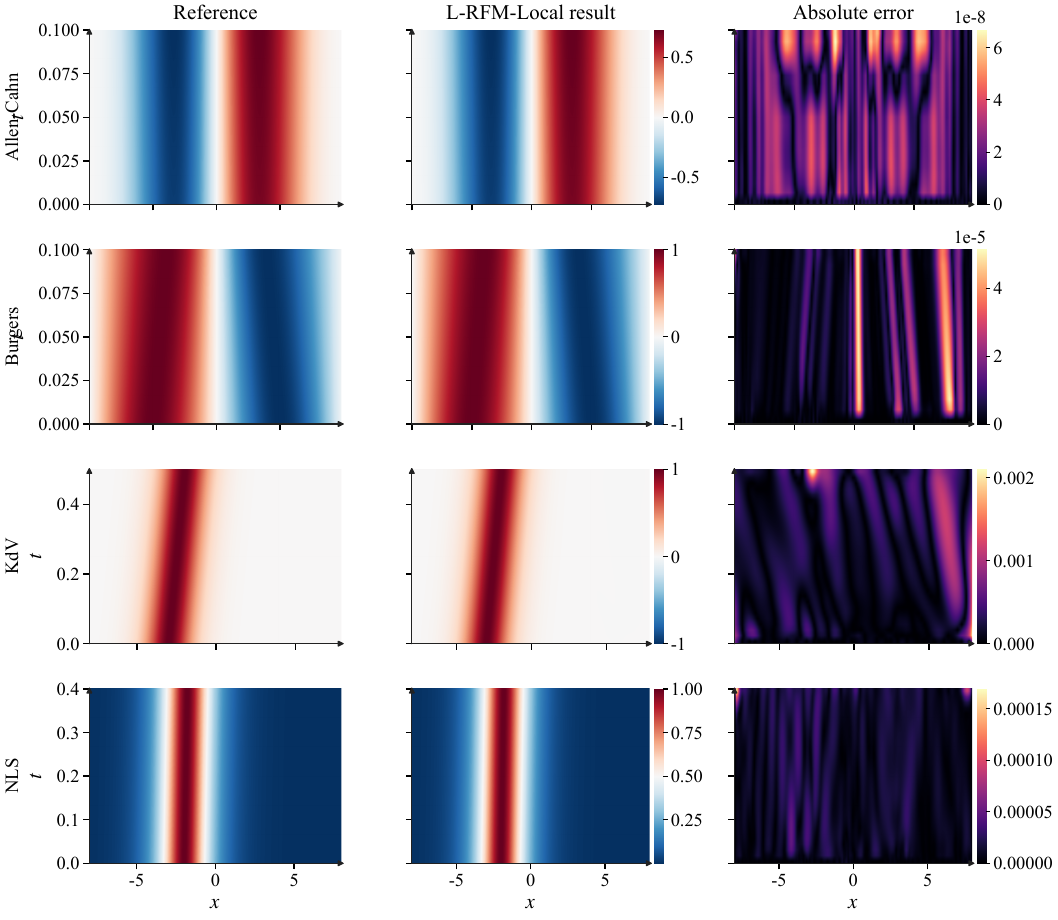}
\captionof{figure}{Representative solution fields for the one-dimensional benchmark suite. Rows show Allen--Cahn, Burgers, KdV, and NLS; columns show reference, L-RFM-Local result, and absolute error. For NLS, the plotted scalar is $|\psi|$. The figure fixes the displayed solver to L-RFM-Local to show solution scales and localized-field quality; quantitative method comparisons, including the L-RFM-Global Burgers row, are reported in Table~\ref{tab:main-baselines}.}
  \label{fig:main-1d-solutions}
\end{center}

\FloatBarrier

\subsection{Mechanism tests: temporal scales, diffusion scale, and conditioning}
\label{subsec:mechanism-evidence}

The matched-capacity comparisons motivate a closer look at the mechanisms associated with the liquid-feature gains.
This subsection isolates four questions: whether the liquid trial space represents separated temporal scales, whether the same response remains accurate as the Allen--Cahn diffusion parameter is varied, whether the gain persists under component ablations, and how the resulting row-weighted LS systems are conditioned.

\subsubsection{Prescribed multi-scale temporal test}
\label{subsec:core-temporal}

We use a prescribed multi-scale temporal solution on $[-1,1]\times[0,0.1]$ to separate temporal representation from nonlinear interface effects:
\begin{equation}
u_\star(x,t)=\sin(\pi x)\sum_{j=1}^3\alpha_j e^{-t/\tau_j},
\qquad
(\tau_1,\tau_2,\tau_3)=(1,10^{-2},10^{-4}),\quad \alpha_j=\tfrac13 .
\label{eq:mms-multiscale}
\end{equation}
The problem solves $\partial_tu-\partial_{xx}u=f(x,t)$ with periodic boundary conditions, initial data $u(x,0)=u_\star(x,0)$, and source $f=u_{\star,t}-u_{\star,xx}$.
Unlike Allen--Cahn, this problem has no nonlinear interface or spatial front: it directly probes whether the sampled trial space can represent several separated relaxation scales.
At the largest tested readout dimension ($P=640$), L-RFM-Global attains a five-seed mean relative $L^2$ error of $(4.22\pm2.43){\times}10^{-4}$, while the strongest static trial space (ST-RFM-SoV) attains $(6.33\pm3.37){\times}10^{0}$.
Across the sweep in Figure~\ref{fig:mms-convergence}, the liquid rows decrease rapidly with $P$, whereas the static rows remain at order-one or larger errors at the largest tested dimension.
Because the spatial factor is a single smooth sinusoidal mode, the L-RFM-Global row gives a direct test of the temporal response without PoU effects.
This experiment separates temporal-scale representation from the nonlinear-interface benchmarks.
The full readout-dimension sweep is shown in Figure~\ref{fig:mms-convergence}.

\begin{center}
  \centering
  \maybeincludegraphics[width=0.75\linewidth]{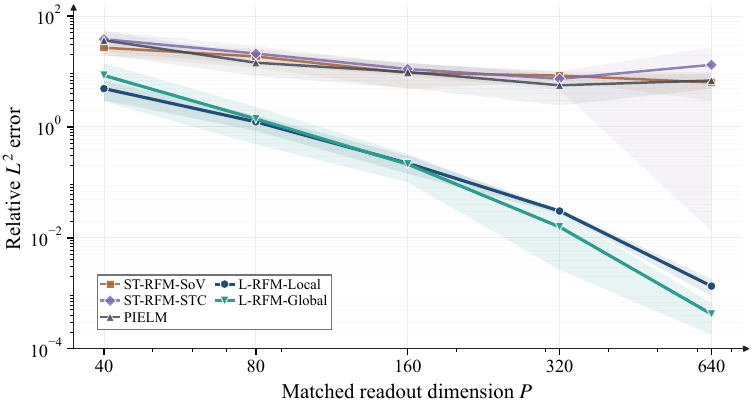}
  \captionof{figure}{Temporal-scale resolution on a prescribed multi-scale temporal problem. Curves show five-seed mean relative $L^2$ error versus $P$; shaded bands show sample standard deviation.}
  \label{fig:mms-convergence}
\end{center}

\subsubsection{Allen--Cahn diffusion-scale sweep}

We next revisit the Allen--Cahn equation and periodic initial data introduced in Section~\ref{subsec:local-global-1d}, now on $[-1,1]\times(0,0.15]$ and with the diffusion parameter varied.
This test varies the diffusion scale in the implemented Allen--Cahn model and progressively sharpens the transition region.
At fixed $P=400$, sweeping $\epsilon$ from $10^{-1}$ to $10^{-4}$ changes the diffusion scale by four decades and makes the transition region increasingly sharp; each reported row uses five random seeds.
Across five-seed means, the best liquid row spans $1.27{\times}10^{-7}$--$3.13{\times}10^{-7}$, the best static localized row spans $8.62{\times}10^{-6}$--$1.28{\times}10^{-5}$, and PIELM spans $1.57{\times}10^{-6}$--$2.22{\times}10^{-5}$.
At $\epsilon=10^{-4}$, the raw row-weighted LS condition numbers are $4.70{\times}10^{13}$ for L-RFM-Local, $2.66{\times}10^{16}$--$5.35{\times}10^{16}$ for the two static localized trial spaces, $2.59{\times}10^{15}$ for PIELM, and $1.69{\times}10^{16}$ for L-RFM-Global.
All solves use the common SVD truncation threshold, so these condition numbers measure the assembled row-weighted matrices before regularized least-squares solution.
Table~\ref{tab:stiff} reports the five-seed error sweep, and Figure~\ref{fig:stiff} plots the error and raw conditioning trends.
At $\epsilon=10^{-1}$, L-RFM-Global and L-RFM-Local are statistically close: the global row has a slightly lower mean, but the two sample-standard-deviation bands overlap.
This behavior is consistent with a broad, spatially smooth response for which PoU localization contributes little, while splitting a fixed $P$ across patches reduces the number of features available to each local fit.
As $\epsilon$ decreases and the transition region sharpens, the localized liquid row becomes the stronger matched-$P$ choice.
Figure~\ref{fig:stiff-solution} gives a representative Allen--Cahn solution field at $\epsilon=10^{-4}$.

\begin{center}
  \centering
  \captionof{table}{Diffusion-scale robustness on Allen--Cahn at matched readout dimension. Entries are relative $L^2$ errors at $P=400$, reported as mean $\pm$ sample standard deviation over five random seeds. Bold marks the lowest mean.}
  \label{tab:stiff}
  \TableFont
  \begin{tabular}{@{}lccc@{\hspace{8pt}}cc@{}}
    \toprule
    & \multicolumn{3}{c}{Static baselines} & \multicolumn{2}{c}{L-RFM variants} \\
    \cmidrule(lr){2-4}\cmidrule(l){5-6}
    $\epsilon$ & ST-RFM-SoV & ST-RFM-STC & PIELM & L-RFM-Local & L-RFM-Global \\
    \midrule
    $10^{-1}$ & $(1.28\pm0.24){\times}10^{-5}$ & $(1.95\pm0.85){\times}10^{-5}$ & $(1.57\pm1.30){\times}10^{-6}$ & $(2.92\pm1.77){\times}10^{-7}$ & $\boldsymbol{(2.89\pm1.10){\times}10^{-7}}$ \\
    $10^{-2}$ & $(1.00\pm0.11){\times}10^{-5}$ & $(1.39\pm0.41){\times}10^{-5}$ & $(9.83\pm6.85){\times}10^{-6}$ & $\boldsymbol{(1.27\pm0.36){\times}10^{-7}}$ & $(3.58\pm1.38){\times}10^{-7}$ \\
    $10^{-3}$ & $(8.62\pm0.62){\times}10^{-6}$ & $(1.16\pm0.28){\times}10^{-5}$ & $(1.37\pm0.29){\times}10^{-5}$ & $\boldsymbol{(3.13\pm1.83){\times}10^{-7}}$ & $(1.06\pm0.42){\times}10^{-6}$ \\
    $10^{-4}$ & $(1.11\pm0.24){\times}10^{-5}$ & $(1.14\pm0.51){\times}10^{-5}$ & $(2.22\pm2.68){\times}10^{-5}$ & $\boldsymbol{(3.09\pm1.92){\times}10^{-7}}$ & $(8.67\pm6.34){\times}10^{-7}$ \\
    \bottomrule
  \end{tabular}
\end{center}

\begin{center}
  \centering
  \maybeincludegraphics[width=\FigDiagnosticWidth]{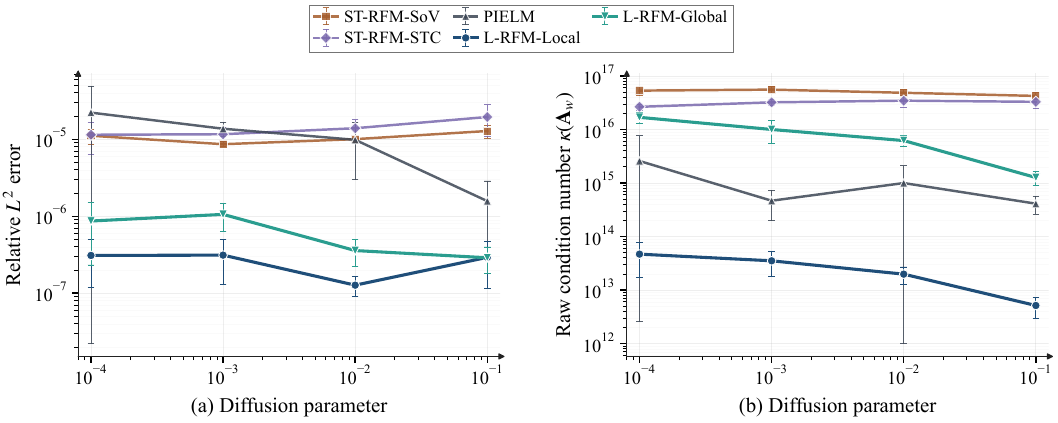}
  \captionof{figure}{Accuracy and conditioning trends across the Allen--Cahn diffusion-scale sweep. Left: mean relative $L^2$ error. Right: raw row-weighted LS condition number. Error bars show sample standard deviation over five random seeds.}
  \label{fig:stiff}
\end{center}

\begin{center}
  \centering
  \maybeincludegraphics[width=\FigFieldWidth]{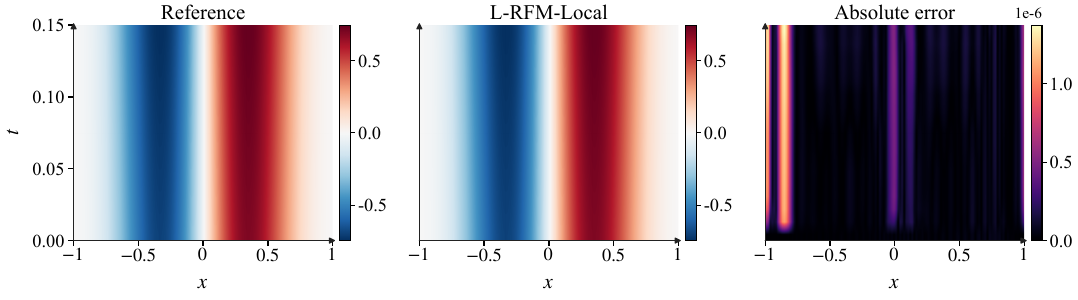}
  \captionof{figure}{Representative stiff Allen--Cahn solution field at $\epsilon=10^{-4}$.}
  \label{fig:stiff-solution}
\end{center}

\subsubsection{Ablation of temporal response and time-scale sampling}
\label{subsec:ablations}

This subsection isolates two components of the liquid trial space: the ODE-driven temporal response and the sampling law for the time scale $\tau$.
The test separates the temporal mechanism from the choice of comparison method.
In the static-feature ablation, we replace each liquid feature by its $t=0$ response and set its time derivative to zero, while keeping the PoU, random-sampling protocol, collocation sizes, row weights, Picard iteration, and SVD solver fixed.
We now perform this ablation on Heat, Allen--Cahn, Burgers, KdV, and NLS so that the test covers the same one-dimensional mechanisms used in the matched-capacity comparison: linear diffusion, reaction--diffusion, nonlinear transport, third-order dispersion, and complex-valued real-block assembly.
Removing the temporal response consistently degrades accuracy: Heat 1D changes from $(3.28\pm2.07){\times}10^{-6}$ to $(9.13\pm0.03){\times}10^{-1}$, Allen--Cahn 1D from $(9.66\pm5.41){\times}10^{-8}$ to $(1.84\pm0.05){\times}10^{-1}$, Burgers 1D from $(1.32\pm0.71){\times}10^{-2}$ to $(3.72\pm0.03){\times}10^{-1}$, KdV 1D from $(1.63\pm0.42){\times}10^{-3}$ to $(3.66\pm0.07){\times}10^{-1}$, and NLS 1D from $(7.91\pm3.47){\times}10^{-5}$ to $(1.17\pm0.03){\times}10^{-1}$.
The common pattern is that a spatially localized but temporally static trial space cannot reproduce the time-dependent residuals at comparable accuracy, even when all sampling, PoU, and solver settings are unchanged.

The $\tau$ ablation is run with five random seeds on Allen--Cahn.
The default row samples $\tau/T$ log-uniformly on $[0.05,20]$, while the fixed-$\tau$ rows use $\tau/T=0.05$, $1$, and $20$.
The log-uniform row has error $(7.55\pm2.94){\times}10^{-8}$; the small fixed-$\tau$ row gives $(2.49\pm0.93){\times}10^{-1}$, the geometric-mean fixed-$\tau$ row gives $(3.08\pm0.56){\times}10^{-7}$, and the large fixed-$\tau$ row gives $(5.20\pm0.14){\times}10^{-8}$.
These results support the interpretation that the ODE-driven temporal response is the dominant component in this test, while the $\tau$-sampling law mainly improves robustness to time-scale misspecification.
The tuned large fixed-$\tau$ row is competitive on this Allen--Cahn setting, while the log-uniform row supplies a more robust default when the relevant relaxation scale is not known a priori.
Figure~\ref{fig:ablations} summarizes these two tests.

\begin{figure}[t]
  \centering
  \begin{subfigure}{0.48\linewidth}
    \centering
    \maybeincludegraphics[width=\linewidth]{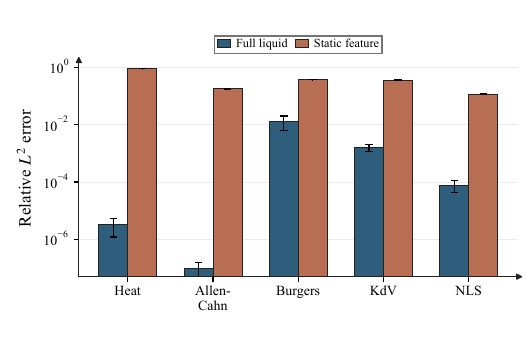}
    \caption{Static-feature ablation without the ODE-driven temporal response.}
    \label{fig:static-ablation}
  \end{subfigure}\hfill
  \begin{subfigure}{0.48\linewidth}
    \centering
    \maybeincludegraphics[width=\linewidth]{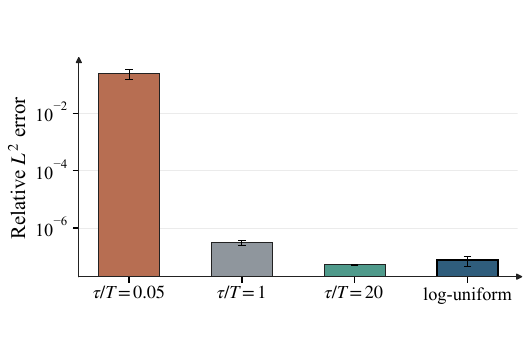}
    \caption{Time-scale sampling ablation with fixed $\tau/T$.}
    \label{fig:tau-ablation}
  \end{subfigure}
  \caption{Ablations of the liquid temporal response and time-scale sampling law. Left: removing the ODE-driven temporal response on the one-dimensional benchmark set. Right: replacing log-uniform $\tau/T\in[0.05,20]$ by fixed $\tau/T\in\{0.05,1,20\}$ on Allen--Cahn. Bars show five-seed mean relative $L^2$ error; error bars show sample standard deviation.}
  \label{fig:ablations}
\end{figure}

\subsubsection{Conditioning of matched-capacity systems}
\label{subsec:convergence-conditioning}

For the localized liquid trial space, the accuracy gains are often accompanied by smaller raw row-weighted LS condition numbers on the tested difficult rows.
These empirical measurements complement the density and temporal-rank results in Section~\ref{sec:theory} by characterizing the assembled finite-feature systems.
We report $\kappa(\mathbf A_w)$, where $\mathbf A_w$ is the row-weighted LS matrix assembled before SVD truncation.
The values are raw measurements for the untruncated weighted matrix; all solves nevertheless use the common SVD truncation threshold.
Table~\ref{tab:condition-panorama} includes the Heat and Burgers scalar PDEs used in the left panel of Figure~\ref{fig:ablations}, together with three rows that stress the least-squares geometry: small-$\epsilon$ reaction--diffusion, complex real-block assembly, and a third-derivative dispersive residual.
Across these Heat, Allen--Cahn, Burgers, NLS, and KdV rows, the L-RFM-Local matrix has a condition number two to five orders of magnitude below the static random-feature matrices at matched readout dimension.
The large L-RFM-Global condition numbers in the same rows show that conditioning and temporal expressivity are distinct finite-feature effects.
Together, the results support the view that PoU localization combined with liquid temporal features improves the finite-feature LS geometry on the tested difficult rows.

\begin{center}
  \centering
  \captionof{table}{Conditioning of selected matched-capacity least-squares systems. Entries report $\kappa(\mathbf A_w)$ for the row-weighted matrix before SVD truncation, as mean $\pm$ sample standard deviation over five random seeds; all reported solves use the common SVD truncation threshold.}
  \label{tab:condition-panorama}
  \TableFont
  \setlength{\tabcolsep}{4pt}
  \begin{tabular}{@{}lccc@{\hspace{8pt}}cc@{}}
    \toprule
    & \multicolumn{3}{c}{Static baselines} & \multicolumn{2}{c}{L-RFM variants} \\
    \cmidrule(lr){2-4}\cmidrule(l){5-6}
    Problem & ST-RFM-SoV & ST-RFM-STC & PIELM & L-RFM-Local & L-RFM-Global \\
    \midrule
    Heat 1D & $(4.41\pm1.70){\times}10^{16}$ & $(1.31\pm0.59){\times}10^{16}$ & $(6.49\pm2.01){\times}10^{16}$ & $\boldsymbol{(5.55\pm3.41){\times}10^{13}}$ & $(5.60\pm1.46){\times}10^{16}$ \\
    Allen--Cahn, $\epsilon=10^{-4}$ & $(5.35\pm1.03){\times}10^{16}$ & $(2.66\pm0.35){\times}10^{16}$ & $(2.59\pm5.11){\times}10^{15}$ & $\boldsymbol{(4.70\pm3.01){\times}10^{13}}$ & $(1.69\pm0.38){\times}10^{16}$ \\
    Burgers 1D & $(2.74\pm0.96){\times}10^{16}$ & $(5.42\pm1.80){\times}10^{15}$ & $(1.77\pm0.96){\times}10^{16}$ & $\boldsymbol{(3.69\pm4.15){\times}10^{12}}$ & $(9.15\pm2.15){\times}10^{16}$ \\
    NLS & $(2.53\pm1.47){\times}10^{16}$ & $(2.65\pm1.09){\times}10^{15}$ & $(1.33\pm0.90){\times}10^{16}$ & $\boldsymbol{(1.02\pm0.38){\times}10^{11}}$ & $(1.77\pm0.62){\times}10^{16}$ \\
    KdV & $(9.55\pm7.67){\times}10^{15}$ & $(6.23\pm3.16){\times}10^{14}$ & $(6.35\pm4.05){\times}10^{15}$ & $\boldsymbol{(2.41\pm5.38){\times}10^{12}}$ & $(2.40\pm5.28){\times}10^{16}$ \\
    \bottomrule
  \end{tabular}
\end{center}

\FloatBarrier

\subsection{Multidimensional and long-time robustness checks}
\label{subsec:robustness-limits}

After the one-dimensional matched-capacity comparisons and mechanism tests, we include two additional checks beyond the main benchmark setting.
The first examines whether the same multiscale temporal construction remains usable in a three-dimensional tensor-product assembly.
The second tests long-time block marching and records the current implementation timing behavior.
These results test robustness and scope while preserving the main comparison protocol.

\subsubsection{Three-dimensional prescribed temporal-scale check}
\label{subsec:scope-checks}

This subsection extends the prescribed multi-scale temporal problem in Section~\ref{subsec:core-temporal} to three spatial dimensions.
The test complements the one-dimensional sweep by combining the same relaxation-scale mechanism with a tensor-product PoU, a three-dimensional Laplacian, and periodic boundary matching.
We solve the forced heat equation
\begin{equation}
  u_t-\Delta u=f(\bx,t),\qquad \bx\in[-1,1]^3,\quad t\in(0,0.1],
  \label{eq:mms3d-pde}
\end{equation}
with periodic value and normal-derivative matching on opposite faces.
The exact solution is
\begin{equation}
  \begin{aligned}
  u_\star(\bx,t)&=a(\bx)q(t),\\
  a(\bx)&=1+\frac{\sin(\pi x)+\sin(\pi y)+\sin(\pi z)}{3},\\
  q(t)&=\frac{1}{3}\left(e^{-t}+e^{-10^2t}+e^{-10^4t}\right),
  \end{aligned}
  \label{eq:mms3d-exact}
\end{equation}
and $f$ is obtained by substituting \eqref{eq:mms3d-exact} into \eqref{eq:mms3d-pde}.
The nonzero mean in $a(\bx)$ prevents the test from being dominated by near-zero spatial nodes, while the three sine terms still exercise all components of the 3D Laplacian.
All rows use the same collocation set with 5200 interior points, 1000 initial points, and 160 paired samples per coordinate direction for periodic matching.
We use one matched readout dimension, $P=3840$; localized methods use a $2\times2\times2$ PoU with $M=480$ features per patch, and global methods use 3840 scalar features.
The tuning changes only generic numerical parameters, namely the random-feature capacity, random scale range, and collocation row count.

With these fixed experimental parameters, the five-seed time-space relative $L^2$ errors are $(7.38\pm3.05){\times}10^{1}$ for ST-RFM-SoV, $(1.76\pm0.78){\times}10^{2}$ for ST-RFM-STC, $(4.16\pm0.30){\times}10^{1}$ for PIELM, $(5.99\pm3.86){\times}10^{-3}$ for L-RFM-Local, and $(4.14\pm0.63){\times}10^{-2}$ for L-RFM-Global.
L-RFM-Local gives the lowest mean relative $L^2$ error, while the global liquid row is the next closest and the static rows remain several orders of magnitude larger on this test.
The result is consistent with the temporal-rank calculation in Section~\ref{subsec:finite-scale}: the prescribed response contains separated relaxation scales, and the liquid trial spaces supply corresponding temporal directions before the 3D residual is assembled.
Thus this experiment is a controlled 3D extension of the temporal-scale test, showing that the localized liquid trial space remains the most effective fixed-$P$ choice among the tested trial spaces when the same three relaxation scales are embedded in a 3D heat residual.
Figure~\ref{fig:mms3d-solution} shows representative mid-plane slices for the L-RFM-Local solution.

\begin{center}
  \maybeincludegraphics[width=0.85\linewidth]{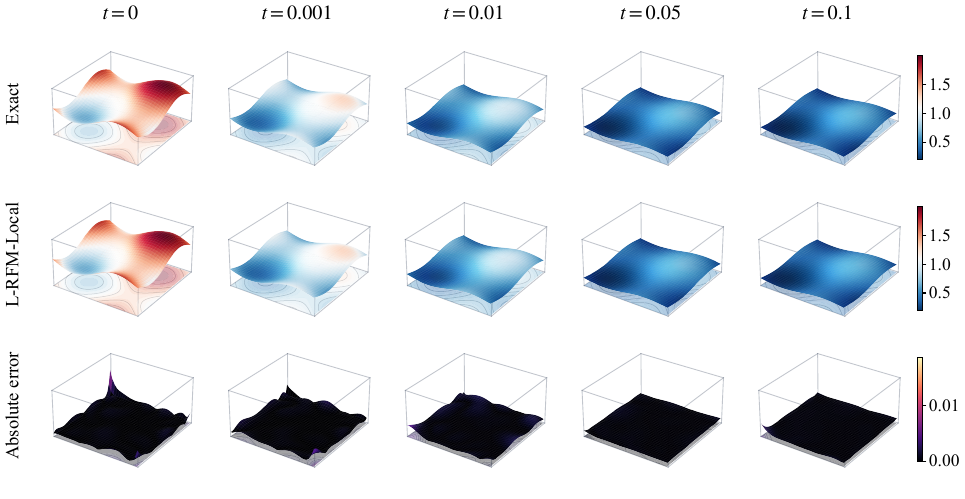}
  \captionof{figure}{Three-dimensional prescribed temporal-scale problem solved by L-RFM-Local. Mid-plane slices at $z=0.5$ are shown for one representative seed; columns show five time levels, and rows show the reference solution, L-RFM-Local solution, and absolute error.}
  \label{fig:mms3d-solution}
\end{center}

\subsubsection{Block marching on long-time Allen--Cahn}

This experiment tests whether a long time interval can be decomposed into shorter space--time LS windows without losing stability, while keeping the per-block feature count fixed.
Long time windows are therefore treated with block marching, with each block carrying its own Picard iterations and LS factorizations.
This row tests long-time stability and error propagation under a fixed per-block feature count.

Long-time Allen--Cahn on $T=2.0$ tests this operating boundary by keeping the per-block feature count fixed at $P=320$ and varying the number of blocks $B=1,4,8$.
Over five random seeds, the final-time relative $L^2$ errors are $(1.88\pm1.10){\times}10^{-1}$ for one block, $(1.23\pm0.63){\times}10^{-2}$ for four blocks, and $(8.26\pm3.03){\times}10^{-3}$ for eight blocks.
The corresponding solve times are $(1.45\pm0.01){\times}10^{1}$ s, $(4.58\pm0.35){\times}10^{1}$ s, and $(6.32\pm0.14){\times}10^{1}$ s, respectively.
Thus the main accuracy gain occurs when the long interval is first split into shorter windows; increasing from four to eight blocks gives a smaller additional gain, consistent with diminishing returns once the per-block feature count and repeated LS cost become active limitations.
Figure~\ref{fig:block-marching} plots the time-history relative $L^2$ errors for this block-count sweep, and Figure~\ref{fig:block-marching-solution} confirms that the block-marched solution preserves the interface evolution over the full time interval.

\begin{center}
  \begin{minipage}{0.72\linewidth}
  \centering
  \maybeincludegraphics[width=\linewidth]{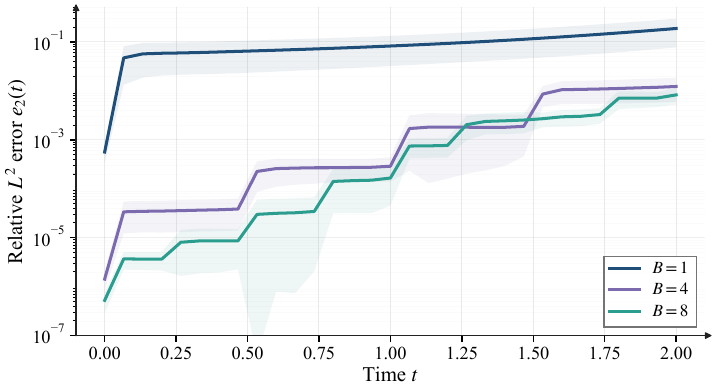}
  \captionof{figure}{Block marching controls long-time Allen--Cahn error over $T=2.0$. Curves and bands show mean $\pm$ sample standard deviation over five random seeds.}
  \label{fig:block-marching}
  \end{minipage}
\end{center}

\begin{center}
  \begin{minipage}{0.9\linewidth}
  \centering
  \maybeincludegraphics[width=\linewidth]{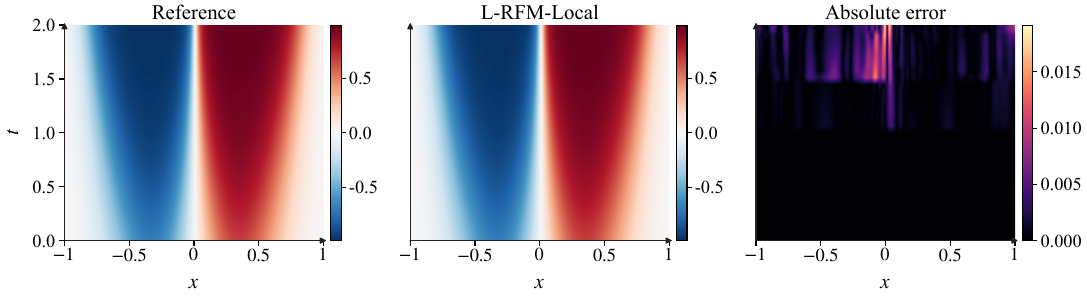}
  \captionof{figure}{Representative long-time Allen--Cahn solution from block marching.}
  \label{fig:block-marching-solution}
  \end{minipage}
\end{center}

\subsubsection{Wall-clock timing and implementation bottlenecks}
\label{subsec:timing}

The present implementation evaluates features patch by patch and recomputes the LS factorization at every Picard step.
For a row count $N$ and readout dimension $P$, dense assembly scales essentially with the number of active derivative evaluations, while the SVD-based LS solve scales like a dense overdetermined solve in $N$ and $P$ and is repeated across Picard steps for nonlinear PDEs.
The local PoU support reduces the number of active features per spatial row in principle, and this structure points to sparse assembly and vectorized patch evaluation as natural implementation improvements.
Table~\ref{tab:timing} reports a fixed-capacity Allen--Cahn timing comparison at the target-accuracy scale.
All methods use the same readout dimension $P=240$; for L-RFM-Local this corresponds to $(K,M)=(4,60)$, while the global methods use 240 scalar features.
The timing row in Table~\ref{tab:timing} was rerun on the same CPU server equipped with an Intel(R) Xeon(R) Platinum 8476C processor.
This fixed-capacity construction keeps the wall-clock comparison tied to a matched one-dimensional readout dimension.
At this fixed capacity, all methods reach the $10^{-4}$ target scale on Allen--Cahn, with the two L-RFM variants giving the smallest mean relative errors.
L-RFM-Global is competitive in wall-clock time, whereas L-RFM-Local is slower because the current implementation evaluates patchwise liquid features and recomputes dense LS factorizations during Picard iteration.
The reported wall-clock times include feature evaluation, matrix assembly, all Picard iterations, and repeated dense LS factorizations, but exclude reference-solution generation and plotting.
The table reports five-seed mean relative errors and wall-clock times with sample standard deviations, providing a practical reference point for future vectorized feature evaluation, factorization reuse, and sketched LS solvers.

\begin{center}
  \begin{minipage}{\linewidth}
  \centering
  \captionof{table}{Wall-clock cost and accuracy at fixed capacity for Allen--Cahn 1D. Error and time entries are mean $\pm$ sample standard deviation over five random-seed runs at the $10^{-4}$ relative-$L^2$ target scale; timings include all Picard LS solves but exclude reference generation and plotting.}
  \label{tab:timing}
  \TableFont
  \setlength{\tabcolsep}{3pt}
  \begin{tabular}{@{}llccc@{\hspace{8pt}}cc@{}}
    \toprule
    & & \multicolumn{3}{c}{Static baselines} & \multicolumn{2}{c}{L-RFM variants} \\
    \cmidrule(lr){3-5}\cmidrule(l){6-7}
    Problem & Quantity & ST-RFM-SoV & ST-RFM-STC & PIELM & L-RFM-Local & L-RFM-Global \\
    \midrule
    \multirow{2}{*}{Allen--Cahn 1D} & Rel.\ $L^2$ & $(2.53\pm0.72){\times}10^{-5}$ & $(4.33\pm0.85){\times}10^{-5}$ & $(4.77\pm0.85){\times}10^{-5}$ & $(3.31\pm0.82){\times}10^{-6}$ & $\boldsymbol{(1.40\pm0.40){\times}10^{-6}}$ \\
    & Time (s) & $(5.74\pm0.10){\times}10^{0}$ & $(5.24\pm0.04){\times}10^{0}$ & $\boldsymbol{(9.62\pm0.69){\times}10^{-1}}$ & $(1.85\pm0.02){\times}10^{1}$ & $(4.86\pm0.07){\times}10^{0}$ \\
    \bottomrule
  \end{tabular}
  \end{minipage}
\end{center}

\FloatBarrier

\section{Discussion and Conclusions}
\label{sec:discussion}
\label{sec:conclusion}

L-RFM addresses a central obstacle in space--time random-feature collocation for evolutionary PDEs: standard frozen activations leave temporal scales implicit, whereas stiff, dispersive, and multi-scale dynamics are often governed by distinct relaxation or propagation time scales.
The method embeds a sampled relaxation spectrum directly into closed-form liquid time-constant responses, producing temporally structured trial spaces while preserving the algebraic simplicity of a linear least-squares (LS) readout.
All hidden parameters remain frozen, residual derivatives are analytic, and nonlinear PDEs are handled through Picard-linearized LS updates.
The resulting construction is a trial-space design for mesh-free LS collocation, with its approximation role supported by the density result in Section~\ref{sec:theory} and its finite-feature behavior assessed by the numerical study in Section~\ref{sec:experiments}.

The numerical results give three main messages.
First, the heat-equation tests and tensor-product multidimensional checks verify the analytic derivative assembly, partition-of-unity construction, and Laplacian terms, while the nonlinear examples exercise the Picard-linearized residual assembly.
Second, the matched-capacity comparisons in Table~\ref{tab:main-baselines} show accuracy gains over static random-feature baselines on reaction--diffusion, nonlinear transport, dispersive, and complex-valued systems.
Third, the prescribed temporal-scale test, the Allen--Cahn diffusion-scale sweep in Table~\ref{tab:stiff} and Figure~\ref{fig:stiff}, and the fixed-$\tau$ ablations in Figure~\ref{fig:ablations} identify the relaxation-scale law as a finite-feature design variable in the tested regimes.
The conditioning measurements in Table~\ref{tab:condition-panorama} complement this conclusion by showing how the temporal feature design changes the geometry of the assembled row-weighted LS systems.

The local and global variants separate temporal representation from spatial localization.
L-RFM-Global is natural for spatially smooth or coherent responses in which temporal-scale representation is the main approximation bottleneck.
L-RFM-Local is better suited to localized interfaces, spatially nonuniform structures, and higher-derivative residuals, where PoU weights improve locality and column geometry.
The Burgers, Allen--Cahn, KdV, and NLS examples therefore compare two complementary trial-space designs within the same liquid-feature framework.

The results also point to clear extensions.
The multidimensional tests verify assembly and temporal-scale transfer in two and three spatial dimensions.
Broader high-dimensional benchmarks would further clarify the dependence on spatial dimension and geometry.
On the theoretical side, finite-$M$ PDE error, conditioning, feature-count scaling, Picard convergence, block-marching error propagation, and the full coupling through $s(\bx)$ and $\alpha(\bx)$ are natural targets for analysis beyond the qualitative density result.
On the computational side, Table~\ref{tab:timing} shows that the current implementation can be accelerated by vectorizing feature evaluation across feature--patch indices, exploiting local support with sparse or blocked linear algebra, and reusing LS factorizations during Picard iteration.
Taken together, the theory and experiments show that relaxation-scale representation can be built into frozen random features without leaving the linear LS regime, providing a high-accuracy continuous space--time approximation framework for evolutionary PDEs in which temporal scales control the finite-dimensional approximation.

\section*{Acknowledgements}
This work was carried out during Jiale Linghu's visit to the Department of Mathematics, National University of Singapore. The authors are grateful to Professor Weizhu Bao for his valuable suggestions and insightful discussions, which helped improve the development and presentation of this work.

\appendix
\renewcommand{\theHsection}{appendix.\Alph{section}}
\renewcommand{\theHsubsection}{appendix.\Alph{section}.\arabic{subsection}}
\renewcommand{\theHsubsubsection}{appendix.\Alph{section}.\arabic{subsection}.\arabic{subsubsection}}
\renewcommand{\theHequation}{appendix.\Alph{section}.\arabic{equation}}
\renewcommand{\theHfigure}{appendix.\Alph{section}.\arabic{figure}}
\renewcommand{\theHtable}{appendix.\Alph{section}.\arabic{table}}
\section{Proofs for Trial-Space and Temporal-Scale Results}
\label{app:theory-proofs}

This appendix provides the details behind the density theorem and the temporal-rank calculation in Section~\ref{sec:theory}.

\subsection{Proof of the density theorem}
\label{app:density-proof}

The proof combines a temporal density lemma with the classical density of neural ridge functions in space.

\begin{lemma}[Temporal density]
\label{lem:temporal-density}
For any $0<\alpha_0<\alpha_1<\infty$, the linear span of $\bigl\{e^{-\alpha t}:\alpha\in[\alpha_0,\alpha_1]\bigr\}$ is dense in $C([0,T])$ in the supremum norm.
\end{lemma}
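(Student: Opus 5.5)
We plan to derive the lemma from Stone--Weierstrass together with a divided-difference (derivative-in-$\alpha$) argument. Alternatively one could use a Hahn--Banach/Laplace-transform argument; we take the second route as the main line since the paper's outline mentions a Laplace-transform argument. Let $\mathcal{E}=\operatorname{span}\{e^{-\alpha t}:\alpha\in[\alpha_0,\alpha_1]\}\subset C([0,T])$. By Hahn--Banach and the Riesz representation theorem, it suffices to show that any finite signed Borel measure $\mu$ on $[0,T]$ annihilating $\mathcal{E}$ is zero.

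The key step is analyticity. Define $F(\alpha)=\int_{[0,T]}e^{-\alpha t}\,d\mu(t)$ for $\alpha\in\mathbb{C}$. Because $[0,T]$ is compact and $|e^{-\alpha t}|\le e^{|\alpha|T}$, differentiation under the integral shows that $F$ is entire. By hypothesis $F\equiv 0$ on $[\alpha_0,\alpha_1]$, a set with accumulation points, so the identity theorem gives $F\equiv0$ on $\mathbb{C}$. In particular, all derivatives at $\alpha=0$ vanish: $F^{(n)}(0)=(-1)^n\int t^n\,d\mu(t)=0$ for every $n\ge0$.

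Thus $\mu$ annihilates all polynomials. By the Weierstrass approximation theorem polynomials are dense in $C([0,T])$, so $\mu$ annihilates $C([0,T])$ and hence $\mu=0$. This proves that $\overline{\mathcal{E}}=C([0,T])$.

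As a backup constructive argument, note that $\mathcal{E}$ contains $1$ only in the closure. Its closure is nevertheless an algebra containing $e^{-\alpha t}$ for $\alpha\in[\alpha_0,\alpha_1]$, and difference quotients $(e^{-\alpha t}-e^{-\beta t})/(\alpha-\beta)\to -te^{-\alpha t}$ uniformly, which recovers $t^ne^{-\alpha_0 t}$. Multiplying by a polynomial approximation of $e^{\alpha_0 t}$ then gives density.

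The main obstacle is only justifying that $F$ is entire and that the identity theorem applies. This is routine because the measure is finite and the kernel is uniformly bounded on compacta in $\alpha$. Note that the positivity $\alpha_0>0$ is not actually needed; only the fact that $[\alpha_0,\alpha_1]$ has an accumulation point matters.
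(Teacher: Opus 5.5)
Your main argument is correct and is essentially the paper's proof: an annihilating measure via Hahn--Banach and Riesz, $F$ entire, the identity theorem, then Weierstrass. The only difference is that you read off the moments $\int t^n\,d\mu$ directly at $\alpha=0$, whereas the paper differentiates at some $\alpha_\star>0$ and then uses that multiplication by $e^{-\alpha_\star t}$ is an isomorphism of $C([0,T])$, so your version is slightly more direct.

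In your backup argument, the claim that the closure of the span is an algebra is not justified a priori, since products $e^{-(\alpha+\beta)t}$ leave the rate interval. It is also not needed: multiplying $t^n e^{-\alpha_0 t}$ by a polynomial only produces linear combinations of the functions $t^m e^{-\alpha_0 t}$, which are already in the closed span.
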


In the L-RFM notation, $\alpha=1/\tau$.
Thus a compact interval $\tau\in[\tau_{\min},\tau_{\max}]$ corresponds to $\alpha\in[\tau_{\max}^{-1},\tau_{\min}^{-1}]$.

\begin{proof}
By Hahn--Banach and Riesz representation, it suffices to show that every finite signed regular Borel measure $\mu$ on $[0,T]$ annihilating this span is zero.
Suppose
\[
F(\alpha):=\int_0^T e^{-\alpha t}\,d\mu(t)=0
\qquad\text{for all }\alpha\in[\alpha_0,\alpha_1].
\]
Since $\mu$ is compactly supported, dominated convergence extends $F$ to an entire function of $\alpha\in\mathbb{C}$.
Since $F$ vanishes on a real interval with an accumulation point, the identity theorem gives $F\equiv0$ on $\mathbb{C}$.
Differentiating at any $\alpha_\star>0$ gives
\[
\int_0^T t^m e^{-\alpha_\star t}\,d\mu(t)=0,\qquad m=0,1,\ldots .
\]
Multiplication by $e^{-\alpha_\star t}$ is a bijective isomorphism of $C([0,T])$, and polynomials are dense in $C([0,T])$ by Weierstrass.
Thus $\int f\,d\mu=0$ for every $f\in C([0,T])$, so $\mu=0$.
\end{proof}

We also use the standard ridge-function density result: since $\tanh$ is continuous, bounded, and nonpolynomial, the span of $\bx\mapsto\tanh(\mathbf a^\top\bx+b)$ is dense in $C(X)$ for every compact $X\subset\R^d$ \cite{cybenko1989approximation,leshno1993multilayer}.
The nonsingular affine coordinate maps used on local patches preserve this ridge span by reparameterization.

\begin{proof}[Proof of Theorem~\ref{thm:lrfm-universal}]
Fix $u\in C(Q)$ and $\varepsilon>0$.
Let $X\subset\R^d$ be compact and let $\cR_X$ be the ridge span on $X$.
Let $\cE$ be the span of $t\mapsto e^{-t/\tau}$ for $\tau\in[\tau_{\min},\tau_{\max}]$.
By the ridge density result and Lemma~\ref{lem:temporal-density}, $\overline{\cR_X}=C(X)$ and $\overline{\cE}=C([0,T])$.
The product span $\cR_X\otimes\cE$ is therefore dense in the finite tensor-product span $C(X)\otimes C([0,T])$.
Since the latter span is an algebra that contains constants and separates points, Stone--Weierstrass gives
\[
\overline{\cR_X\otimes\cE}=C(X\times[0,T]).
\]

For the global trial space, take $X=\overline{\Omega}$ after the global affine normalization.
The tensor-product result gives uniform approximation by finite sums of ridge--exponential functions, and Assumption~\ref{a-global} places the corresponding separable feature slice \eqref{eq:subdict} in the closure of the admissible global L-RFM parameter set.
Continuity of the Duhamel feature map on compact parameter sets then transfers the approximation to $\cA_{\mathrm{glob}}$.

For the localized trial space, apply the same tensor-product result on each patch support $X_k=\mathrm{supp}\,\psi_k$ in the local affine coordinate.
Assumption~\ref{a-tau}, the separable slice \eqref{eq:subdict}, and continuity of the feature map imply that, for every $k$, there exists a finite local L-RFM expansion $p_k$ such that
\[
\sup_{(\bx,t)\in X_k\times[0,T]}\lvert u(\bx,t)-p_k(\bx,t)\rvert<\varepsilon .
\]
Define $p=\sum_{k=1}^K\psi_k p_k\in\cA_{\mathrm{loc}}$.
Using $\psi_k\ge0$ and $\sum_k\psi_k=1$,
\[
\lvert u(\bx,t)-p(\bx,t)\rvert
\le \sum_k\psi_k(\bx)\,\|u-p_k\|_\infty
\le \varepsilon .
\]
Zero-coefficient padding makes the feature count uniform across patches.
This proves uniform density of $\cA_{\mathrm{loc}}$.
Since $Q$ has finite measure, uniform convergence implies convergence in $L^p(Q)$ for every $1\le p<\infty$.
\end{proof}

\subsection{Proof of the temporal-rank proposition}
\label{app:scale-proof}

\begin{proof}[Proof of Proposition~\ref{prop:scale-identifiability}]
The functions $\{e^{-\lambda_j t}\}_{j=1}^J$ with distinct real $\lambda_j$ form a Chebyshev system on every interval.
Equivalently, any nontrivial linear combination $\sum_j c_j e^{-\lambda_j t}$ has at most $J-1$ zeros on $[0,T]$.
Since the grid has $N\ge J$ distinct points, a nonzero coefficient vector cannot vanish at all grid points, so the columns of $V$ are linearly independent.
The exact-representation statement follows by construction.
The single-rate statement is the orthogonal projection of $\mathbf y$ onto the one-dimensional span of $\mathbf v_{\lambda_0}$.
\end{proof}

The full-rank statement is an algebraic temporal calculation.
Clustered rates, an unresolved time grid, or a PDE residual that mixes temporal and spatial derivatives can still produce an ill-conditioned collocation matrix.

\subsection{Scope of the theoretical results}
\label{app:theory-remarks}

The density theorem has a direct infinite-sample support consequence: if the sampling-law supports have closures containing the deterministic parameter slices used above, then an infinite i.i.d.\ draw is almost surely dense in the same trial spaces by a standard Borel--Cantelli argument.
This support statement is distinct from finite-feature probability bounds and residual-error estimates.
Similarly, the frozen feature family can be interpreted as inducing a space--time random-feature kernel, whereas the analysis here uses deterministic approximation and temporal-rank arguments rather than RKHS, Monte Carlo, or kernel-regression bounds.
The residual-collocation minimizer depends on the PDE operator, row sampling, row weights, SVD truncation, analytic residual derivatives, and, for nonlinear problems, the Picard iterate.
The conditioning results in Section~\ref{sec:experiments} are empirical measurements for the assembled weighted LS systems and complement, rather than follow directly from, Proposition~\ref{prop:scale-identifiability}.

\section{Experimental Protocol and Supplementary Results}
\label{app:configurations}

This appendix records the numerical protocol and supplementary results supporting Section~\ref{sec:experiments}.
The first subsection states the common solver, weighting, repeat, and metric rules.
The second gives the reference solutions, retained problem parameters, partition geometry, and collocation row counts.
The final subsection reports extended $L^\infty$ errors and empirical conditioning results for the benchmark rows most relevant to temporal representation and local/global support.

\subsection{Common numerical protocol}
\label{app:implementation}

All closed-form features and their analytic derivatives are evaluated in double precision.
Unless otherwise stated, the least-squares solve uses the row-weighted system \eqref{eq:lin-system} and a singular-value decomposition with relative truncation threshold $\sigma_{\mathrm{cut}}=10^{-12}$.
The block weights follow the normalization rule in Section~\ref{subsec:solver}:
\[
\lambda_{\mathrm{PDE}}=1,\qquad
\lambda_b=\left(\frac{N_{\mathrm{PDE}}}{N_b}\right)^{1/2},
\quad b\in\mathcal{B}_{\mathrm{aux}},
\]
where $\mathcal{B}_{\mathrm{aux}}$ contains the nonempty initial, boundary, periodic, or interface-matching residual blocks.
With this convention, each residual block contributes on the scale of its mean squared residual rather than on the scale of its raw row count.
These weights, the SVD threshold, the collocation sets, the evaluation grid, and the Picard settings are held fixed across methods on the same benchmark row.
For nonlinear rows, Picard iteration stops when the configured relative update tolerance is reached or when the maximum iteration count is reached; such runs report the last Picard iterate.
For block-marched rows, the reported collocation row count is per block and a fresh feature family is drawn on each block.

Random feature parameters and collocation points are drawn from repeated random streams recorded together with the configuration.
All numerical experiments in this paper are run with five independent random seeds; quantitative entries report the mean $\pm$ sample standard deviation over those five runs.
Field-visualization figures display one fixed seed from the same five-seed batch for qualitative inspection, while the accompanying numerical summaries use the five-seed statistics.
Timing rows use the same five seeded runs and report mean $\pm$ sample standard deviation.
Condition numbers are computed from the assembled row-weighted LS matrix used for the readout solve, and effective ranks use the same relative SVD threshold as the solver.

\subsection{Reference solutions and retained configurations}
\label{app:configs}

Reference solutions are fixed before each comparison.
Heat 1D, KdV, and focusing NLS use closed-form references; Burgers uses the Cole--Hopf transformation \cite{cole1951quasilinear,hopf1950burgers} evaluated by high-accuracy quadrature.
The NLS reference used in Section~\ref{subsec:local-global-1d} is
\begin{equation}
\begin{aligned}
\psi_\star(x,t)
&=\eta\,\mathrm{sech}\bigl(\eta(x{-}x_0{-}vt)\bigr)
  \exp\!\bigl(i[v(x{-}x_0)+\tfrac12(\eta^2{-}v^2)t]\bigr),\\
&\hspace{1.2em}\eta=1,\qquad v=0.5,\qquad x_0=-2 .
\end{aligned}
\label{eq:nls-exact}
\end{equation}
Allen--Cahn uses a Fourier pseudospectral reference \cite{trefethen2000spectral}.
The 2D heat and 3D prescribed temporal-scale references are stated in Sections~\ref{subsec:verification-convergence} and \ref{subsec:scope-checks}.

Table~\ref{tab:appendix-configs} lists the retained domains, spatial layouts, feature counts, and row counts, grouped according to the main-text experiment sequence.
On each benchmark row, all methods share the same collocation sets, row weights, SVD threshold, nonlinear iteration settings, and reference/evaluation grid.
Fixed-capacity comparisons match the total scalar readout dimension $P=KM$; for multidimensional rows, $K$ denotes the tensor-product PoU layout, and global methods use the corresponding total $P$.

\begin{center}
\centering
  \captionof{table}{Experimental configurations used for the main-text and supplementary results.}
\label{tab:appendix-configs}
\TableFont
\setlength{\tabcolsep}{2pt}
\begin{tabular}{@{}>{\raggedright\arraybackslash}p{0.16\linewidth}>{\raggedright\arraybackslash}p{0.33\linewidth}>{\raggedright\arraybackslash}p{0.12\linewidth}>{\raggedright\arraybackslash}p{0.19\linewidth}>{\raggedright\arraybackslash}p{0.14\linewidth}@{}}
\toprule
PDE & Domain $\times[0,T]$ & Layout & Features & Rows/protocol \\
\midrule
\multicolumn{5}{@{}l}{\emph{Verification and temporal representation}} \\
Heat 1D conv.\ & $[-1,1]\times[0,0.25]$ & $\{1,2,4,8\}$ & $\{25,50,100\}$ & $(900,220,120)$ \\
Heat 1D colloc.\ & $[-1,1]\times[0,0.25]$ & 4 & 100 & varied; see Table~\ref{tab:collocation-sensitivity} \\
Heat 1D verification & $[-1,1]\times[0,0.25]$ & 4 & 100 & $(900,220,120)$ \\
Temporal-scale test & $[-1,1]\times[0,0.1]$, $\tau=\{1,10^{-2},10^{-4}\}$ & $K=4$/global & $P\in\{40,80,160,320,640\}$ & $(1600,320,140)$ \\
\addlinespace[2pt]
\multicolumn{5}{@{}l}{\emph{Mechanism and controlled ablations}} \\
Stiff AC sweep & $[-1,1]\times[0,0.1]$ & 4 & 100 & $(2000,400,120)$ \\
Static ablation & heat/AC/Burgers & as source row & as source row & same rows as source row \\
$\tau$ ablation & AC 1D & 4 & 100 & same rows as Allen--Cahn \\
\addlinespace[2pt]
\multicolumn{5}{@{}l}{\emph{Representative one-dimensional PDE comparison}} \\
Allen--Cahn 1D & $[-1,1]\times[0,0.1]$, $\epsilon{=}10^{-4}$ & 4 & 100 & $(2000,400,120)$ \\
Burgers 1D & $[-1,1]\times[0,0.10]$, $\nu{=}0.005/\pi$ & 6 & 80 & $(550,180,80)$ \\
KdV 1D & $[-8,8]\times[0,0.5]$, $c{=}2,x_0{=}-3$ & 6 & 80 & $(900,240,120)$ \\
NLS 1D & $[-8,8]\times[0,0.4]$, $\eta{=}1,v{=}0.5$ & 6 & 80 & $(3000,400,160)$ \\
\addlinespace[2pt]
\multicolumn{5}{@{}l}{\emph{Multidimensional known-solution checks}} \\
Heat 2D & $[0,1]^2\times[0,0.12]$ & $2\times 2$ & $\{20,30,40,60\}$ & 1370 total rows \\
Temporal-scale 3D & $[-1,1]^3\times[0,0.1]$, $\tau=\{1,10^{-2},10^{-4}\}$ & $K=8$/global & 480 local or $P=3840$ global & $(5200,1000,160)$ \\
\addlinespace[2pt]
\multicolumn{5}{@{}l}{\emph{Robustness, conditioning, and cost}} \\
AC long-time & $[-1,1]\times[0,2.0]$, $B\in\{1,4,8\}$ & 4 & 80 & per block as Allen--Cahn row \\
Timing & Allen--Cahn 1D & $K=4$/global & $P=240$; local $M=60$ & target-row protocol \\
\bottomrule
\end{tabular}
\end{center}

\subsection{Supplementary error and empirical conditioning results}
\label{app:baselines}

This subsection reports supplementary results for the one-dimensional rows that most directly test temporal representation and local/global support: Heat 1D, Allen--Cahn 1D, Burgers 1D, KdV 1D, and NLS 1D.
It also includes the stiff Allen--Cahn sweep used to examine conditioning as $\epsilon$ decreases.
The multidimensional known-solution checks are described in Section~\ref{subsec:scope-checks} and configured in Table~\ref{tab:appendix-configs}.

Table~\ref{tab:appendix-baselines} adds two result blocks to the selected one-dimensional subset: five-seed relative $L^\infty$ errors and empirical LS condition numbers.
These results support the main-text trends with a stronger pointwise error norm and with the singular spectrum of the assembled LS matrix.
The $L^\infty$ block broadly follows the five-seed mean-error picture, with Burgers 1D favoring L-RFM-Global and KdV 1D favoring L-RFM-Local.
The condition-number block is empirical and is computed from the row-weighted LS matrix used in the readout solve.
In this representative subset, L-RFM-Local has the smallest singular spectrum ratio across every nontrivial row, sometimes by four to five orders of magnitude.
For example, on NLS at matched real-block dimension $P=960$, L-RFM-Local has five-seed mean $\kappa\approx 1.0{\times}10^{11}$, while ST-RFM and L-RFM-Global cluster around $10^{15}$ to $10^{16}$.
This behavior is consistent with the multi-scale $\tau$ law supplying useful temporal directions and local support reducing spatial column alignment, but it remains empirical rather than a theorem.
On these five-seed rows, lower errors are often accompanied by smaller empirical condition numbers; the observation is used as numerical evidence about the assembled systems, not as a conditioning theorem.

\begin{center}
\centering
\captionof{table}{Supplementary error and conditioning results for selected one-dimensional benchmarks. Relative $L^\infty$ errors and empirical LS condition numbers are reported as mean $\pm$ sample standard deviation over five random seeds.}
\label{tab:appendix-baselines}
\TableFont
\setlength{\tabcolsep}{4pt}
\begin{tabular}{@{}llccc@{\hspace{8pt}}cc@{}}
\toprule
& & \multicolumn{3}{c}{Static baselines} & \multicolumn{2}{c}{L-RFM variants} \\
\cmidrule(lr){3-5}\cmidrule(l){6-7}
Metric & Problem & ST-RFM-SoV & ST-RFM-STC & PIELM & L-RFM-Local & L-RFM-Global \\
\midrule
\multirow{5}{*}{Rel.\ $L^\infty$}
 & Heat 1D        & $(6.83\pm1.35){\times}10^{-7}$ & $(1.93\pm0.84){\times}10^{-7}$ & $\boldsymbol{(1.39\pm0.81){\times}10^{-8}}$ & $(2.45\pm3.16){\times}10^{-6}$ & $(9.18\pm10.00){\times}10^{-8}$ \\
 & Allen--Cahn 1D & $(1.14\pm0.30){\times}10^{-5}$ & $(1.93\pm0.87){\times}10^{-5}$ & $(8.41\pm4.65){\times}10^{-6}$ & $\boldsymbol{(1.19\pm0.29){\times}10^{-7}}$ & $(8.61\pm2.73){\times}10^{-7}$ \\
 & Burgers 1D     & $(2.90\pm0.95){\times}10^{-5}$ & $(2.36\pm1.02){\times}10^{-5}$ & $(3.62\pm1.33){\times}10^{-5}$ & $(1.04\pm0.48){\times}10^{-4}$ & $\boldsymbol{(5.65\pm1.86){\times}10^{-6}}$ \\
 & KdV 1D         & $(1.48\pm1.48){\times}10^{-2}$ & $(1.38\pm0.70){\times}10^{-2}$ & $(2.22\pm1.57){\times}10^{-2}$ & $\boldsymbol{(2.56\pm0.96){\times}10^{-3}}$ & $(5.06\pm6.52){\times}10^{-3}$ \\
 & NLS 1D         & $(2.16\pm0.55){\times}10^{-3}$ & $(4.81\pm0.91){\times}10^{-3}$ & $(2.59\pm2.86){\times}10^{-2}$ & $\boldsymbol{(1.33\pm0.46){\times}10^{-4}}$ & $(3.03\pm1.73){\times}10^{-4}$ \\
\midrule
\multirow{5}{*}{$\kappa(\mathbf{A})$}
 & Heat 1D        & $(4.41\pm1.70){\times}10^{16}$ & $(1.31\pm0.59){\times}10^{16}$ & $(6.49\pm2.01){\times}10^{16}$ & $\boldsymbol{(5.55\pm3.41){\times}10^{13}}$ & $(5.60\pm1.46){\times}10^{16}$ \\
 & Allen--Cahn 1D & $(6.13\pm1.17){\times}10^{16}$ & $(3.03\pm0.47){\times}10^{16}$ & $(5.59\pm5.33){\times}10^{14}$ & $\boldsymbol{(2.75\pm1.90){\times}10^{13}}$ & $(2.88\pm1.04){\times}10^{16}$ \\
 & Burgers 1D     & $(2.74\pm0.96){\times}10^{16}$ & $(5.42\pm1.80){\times}10^{15}$ & $(1.77\pm0.96){\times}10^{16}$ & $\boldsymbol{(3.69\pm4.15){\times}10^{12}}$ & $(9.15\pm2.15){\times}10^{16}$ \\
 & KdV 1D         & $(9.55\pm7.67){\times}10^{15}$ & $(6.23\pm3.16){\times}10^{14}$ & $(6.35\pm4.05){\times}10^{15}$ & $\boldsymbol{(2.41\pm5.38){\times}10^{12}}$ & $(2.40\pm5.28){\times}10^{16}$ \\
 & NLS 1D         & $(2.53\pm1.47){\times}10^{16}$ & $(2.65\pm1.09){\times}10^{15}$ & $(1.33\pm0.90){\times}10^{16}$ & $\boldsymbol{(1.02\pm0.38){\times}10^{11}}$ & $(1.77\pm0.62){\times}10^{16}$ \\
\bottomrule
\end{tabular}
\end{center}

Table~\ref{tab:appendix-conditioning} extends this view to the stiff Allen--Cahn sweep: as $\epsilon$ decreases from $10^{-1}$ to $10^{-4}$, the five-seed mean L-RFM-Local condition number stays within roughly one order of magnitude of $10^{13}$, while the static localized baselines remain between $10^{16}$ and $10^{17}$ and L-RFM-Global stays between $10^{15}$ and $10^{16}$.
The pattern across rows is informative: the L-RFM-Local matrix remains in the best-conditioned group throughout the sweep and has the clearest advantage at the stiffest setting.
This behavior is consistent with the multi-scale $\tau$ law providing useful temporal directions and the local support reducing spatial column alignment, but it remains an empirical conditioning result rather than a theorem.

\begin{center}
\centering
\captionof{table}{Supplementary conditioning sweep for stiff Allen--Cahn least-squares systems. Values are reported as mean $\pm$ sample standard deviation over five random seeds.}
\label{tab:appendix-conditioning}
\TableFont
\begin{tabular}{@{}lccc@{\hspace{8pt}}cc@{}}
\toprule
& \multicolumn{3}{c}{Static baselines} & \multicolumn{2}{c}{L-RFM variants} \\
\cmidrule(lr){2-4}\cmidrule(l){5-6}
$\epsilon$ & ST-RFM-SoV & ST-RFM-STC & PIELM & L-RFM-Local & L-RFM-Global \\
\midrule
$10^{-1}$ & $(4.26\pm0.71){\times}10^{16}$ & $(3.30\pm0.87){\times}10^{16}$ & $(4.13\pm1.49){\times}10^{14}$ & $\boldsymbol{(5.16\pm2.16){\times}10^{12}}$ & $(1.27\pm0.38){\times}10^{15}$ \\
$10^{-2}$ & $(4.87\pm0.69){\times}10^{16}$ & $(3.47\pm0.86){\times}10^{16}$ & $(1.00\pm1.09){\times}10^{15}$ & $\boldsymbol{(1.99\pm0.72){\times}10^{13}}$ & $(6.23\pm1.38){\times}10^{15}$ \\
$10^{-3}$ & $(5.58\pm0.90){\times}10^{16}$ & $(3.23\pm0.41){\times}10^{16}$ & $(4.69\pm2.70){\times}10^{14}$ & $\boldsymbol{(3.53\pm1.72){\times}10^{13}}$ & $(1.00\pm0.45){\times}10^{16}$ \\
$10^{-4}$ & $(5.35\pm1.03){\times}10^{16}$ & $(2.66\pm0.35){\times}10^{16}$ & $(2.59\pm5.11){\times}10^{15}$ & $\boldsymbol{(4.70\pm3.01){\times}10^{13}}$ & $(1.69\pm0.38){\times}10^{16}$ \\
\bottomrule
\end{tabular}
\end{center}

\def\bibsection{\section*{References}}
\bibliographystyle{elsarticle-num}
\bibliography{refs}

@article{chen2023strfm,
  title = {The Random Feature Method for Time-Dependent Problems},
  author = {Chen, Jing-Run and E, Weinan and Luo, Yi-Xin},
  journal = {East Asian Journal on Applied Mathematics},
  volume = {13},
  number = {3},
  pages = {435--463},
  year = {2023},
  doi = {10.4208/eajam.2023-065.050423}
}

@book{leveque2007finite,
  title = {Finite Difference Methods for Ordinary and Partial Differential Equations: Steady-State and Time-Dependent Problems},
  author = {LeVeque, Randall J.},
  publisher = {Society for Industrial and Applied Mathematics},
  address = {Philadelphia},
  year = {2007},
  doi = {10.1137/1.9780898717839},
  isbn = {978-0-89871-629-0}
}

@article{hughes1988spacetime,
  title = {Space-Time Finite Element Methods for Elastodynamics: Formulations and Error Estimates},
  author = {Hughes, Thomas J. R. and Hulbert, Gregory M.},
  journal = {Computer Methods in Applied Mechanics and Engineering},
  volume = {66},
  number = {3},
  pages = {339--363},
  year = {1988},
  doi = {10.1016/0045-7825(88)90006-0}
}

@article{chen2022rfm,
  title = {Bridging Traditional and Machine Learning-Based Algorithms for Solving {PDE}s: The Random Feature Method},
  author = {Chen, Jingrun and Chi, Xurong and E, Weinan and Yang, Zhouwang},
  journal = {Journal of Machine Learning},
  volume = {1},
  number = {3},
  pages = {268--298},
  year = {2022},
  doi = {10.4208/jml.220726}
}

@article{chi2024interface,
  title = {The Random Feature Method for Solving Interface Problems},
  author = {Chi, Xurong and Chen, Jingrun and Yang, Zhouwang},
  journal = {Computer Methods in Applied Mechanics and Engineering},
  volume = {420},
  pages = {116719},
  year = {2024},
  doi = {10.1016/j.cma.2023.116719}
}

@article{sun2025twolevel,
  title = {Two-Level Random Feature Methods for Elliptic Partial Differential Equations over Complex Domains},
  author = {Sun, Yifei and Chen, Jingrun},
  journal = {Computer Methods in Applied Mechanics and Engineering},
  volume = {441},
  pages = {117961},
  year = {2025},
  doi = {10.1016/j.cma.2025.117961}
}

@article{song2026discontinuity,
  title = {Discontinuity-Capturing Random Feature Method for Interface Problems},
  author = {Song, Wentian and Chi, Xurong and Yang, Zhouwang and Cheng, Wan and Chen, Jingrun},
  journal = {Computer Methods in Applied Mechanics and Engineering},
  volume = {453},
  pages = {118841},
  year = {2026},
  doi = {10.1016/j.cma.2026.118841}
}

@inproceedings{rahimi2007random,
  title = {Random Features for Large-Scale Kernel Machines},
  author = {Rahimi, Ali and Recht, Benjamin},
  booktitle = {Advances in Neural Information Processing Systems},
  volume = {20},
  pages = {1177--1184},
  year = {2007}
}

@article{raissi2019pinn,
  title = {Physics-Informed Neural Networks: A Deep Learning Framework for Solving Forward and Inverse Problems Involving Nonlinear Partial Differential Equations},
  author = {Raissi, Maziar and Perdikaris, Paris and Karniadakis, George Em},
  journal = {Journal of Computational Physics},
  volume = {378},
  pages = {686--707},
  year = {2019},
  doi = {10.1016/j.jcp.2018.10.045}
}

@article{dwivedi2020pielm,
  title = {Physics Informed Extreme Learning Machine ({PIELM})--A Rapid Method for the Numerical Solution of Partial Differential Equations},
  author = {Dwivedi, Vikas and Srinivasan, Balaji},
  journal = {Neurocomputing},
  volume = {391},
  pages = {96--118},
  year = {2020},
  doi = {10.1016/j.neucom.2019.12.099}
}

@article{dong2021localelm,
  title = {Local Extreme Learning Machines and Domain Decomposition for Solving Linear and Nonlinear Partial Differential Equations},
  author = {Dong, Suchuan and Li, Zongwei},
  journal = {Computer Methods in Applied Mechanics and Engineering},
  volume = {387},
  pages = {114129},
  year = {2021},
  doi = {10.1016/j.cma.2021.114129}
}

@article{dong2022hyperparameter,
  title = {On Computing the Hyperparameter of Extreme Learning Machines: Algorithm and Application to Computational {PDE}s, and Comparison with Classical and High-Order Finite Elements},
  author = {Dong, Suchuan and Yang, Jielin},
  journal = {Journal of Computational Physics},
  volume = {463},
  pages = {111290},
  year = {2022},
  doi = {10.1016/j.jcp.2022.111290}
}

@article{ni2023hcelm,
  title = {Numerical Computation of Partial Differential Equations by Hidden-Layer Concatenated Extreme Learning Machine},
  author = {Ni, Naxian and Dong, Suchuan},
  journal = {Journal of Scientific Computing},
  volume = {95},
  number = {2},
  pages = {35},
  year = {2023},
  doi = {10.1007/s10915-023-02162-0}
}

@article{wang2024elmhighdim,
  title = {An Extreme Learning Machine-Based Method for Computational {PDE}s in Higher Dimensions},
  author = {Wang, Yiran and Dong, Suchuan},
  journal = {Computer Methods in Applied Mechanics and Engineering},
  volume = {418},
  pages = {116578},
  year = {2024},
  doi = {10.1016/j.cma.2023.116578}
}

@article{hasani2021ltc,
  title = {Liquid Time-Constant Networks},
  author = {Hasani, Ramin and Lechner, Mathias and Amini, Alexander and Rus, Daniela and Grosu, Radu},
  journal = {Proceedings of the AAAI Conference on Artificial Intelligence},
  volume = {35},
  number = {9},
  pages = {7657--7666},
  year = {2021},
  doi = {10.1609/aaai.v35i9.16936}
}

@article{sun2024piln,
  title = {A New Method for Solving Nonlinear Partial Differential Equations Based on Liquid Time-Constant Networks},
  author = {Sun, Jiuyun and Dong, Huanhe and Fang, Yong},
  journal = {Journal of Systems Science and Complexity},
  volume = {37},
  number = {2},
  pages = {480--493},
  year = {2024},
  doi = {10.1007/s11424-024-3349-z}
}

@article{schiassi2021xtfc,
  title = {Extreme Theory of Functional Connections: A Fast Physics-Informed Neural Network Method for Solving Ordinary and Partial Differential Equations},
  author = {Schiassi, Enrico and De Florio, Mario and D'Ambrosio, Andrea and Mortari, Daniele and Furfaro, Roberto},
  journal = {Neurocomputing},
  volume = {457},
  pages = {334--356},
  year = {2021},
  doi = {10.1016/j.neucom.2021.06.015}
}

@misc{piatti2026rcde,
  title = {Random Controlled Differential Equations},
  author = {Piatti, Francesco and Cass, Thomas and Turner, William F.},
  eprint = {2512.23670},
  archivePrefix = {arXiv},
  primaryClass = {cs.LG},
  year = {2025},
  note = {Accepted at ICLR 2026}
}

@article{cybenko1989approximation,
  title = {Approximation by Superpositions of a Sigmoidal Function},
  author = {Cybenko, George},
  journal = {Mathematics of Control, Signals and Systems},
  volume = {2},
  number = {4},
  pages = {303--314},
  year = {1989},
  doi = {10.1007/BF02551274}
}

@article{leshno1993multilayer,
  title = {Multilayer Feedforward Networks with a Nonpolynomial Activation Function Can Approximate Any Function},
  author = {Leshno, Moshe and Lin, Vladimir Ya. and Pinkus, Allan and Schocken, Shimon},
  journal = {Neural Networks},
  volume = {6},
  number = {6},
  pages = {861--867},
  year = {1993},
  doi = {10.1016/S0893-6080(05)80131-5}
}

@article{jaeger2001esn,
  title = {The ``Echo State'' Approach to Analysing and Training Recurrent Neural Networks},
  author = {Jaeger, Herbert},
  journal = {GMD Technical Report 148, German National Research Center for Information Technology},
  year = {2001}
}

@article{maass2002lsm,
  title = {Real-Time Computing without Stable States: A New Framework for Neural Computation Based on Perturbations},
  author = {Maass, Wolfgang and Natschl\"ager, Thomas and Markram, Henry},
  journal = {Neural Computation},
  volume = {14},
  number = {11},
  pages = {2531--2560},
  year = {2002},
  doi = {10.1162/089976602760407955}
}

@article{wang2022ntk,
  title = {When and Why {PINN}s Fail to Train: A Neural Tangent Kernel Perspective},
  author = {Wang, Sifan and Yu, Xinling and Perdikaris, Paris},
  journal = {Journal of Computational Physics},
  volume = {449},
  pages = {110768},
  year = {2022},
  doi = {10.1016/j.jcp.2021.110768}
}

@inproceedings{racca2021apiesn,
  title = {Automatic-Differentiated Physics-Informed Echo State Network ({API-ESN})},
  author = {Racca, Alberto and Magri, Luca},
  booktitle = {International Conference on Computational Science (ICCS)},
  series = {Lecture Notes in Computer Science},
  volume = {12746},
  pages = {323--329},
  year = {2021},
  doi = {10.1007/978-3-030-77977-1\_25}
}

@article{karniadakis2021review,
  title = {Physics-informed machine learning},
  author = {Karniadakis, George Em and Kevrekidis, Ioannis G. and Lu, Lu and Perdikaris, Paris and Wang, Sifan and Yang, Liu},
  journal = {Nature Reviews Physics},
  volume = {3},
  number = {6},
  pages = {422--440},
  year = {2021},
  doi = {10.1038/s42254-021-00314-5}
}

@article{lu2021deeponet,
  title = {Learning Nonlinear Operators via {DeepONet} Based on the Universal Approximation Theorem of Operators},
  author = {Lu, Lu and Jin, Pengzhan and Pang, Guofei and Zhang, Zhongqiang and Karniadakis, George Em},
  journal = {Nature Machine Intelligence},
  volume = {3},
  number = {3},
  pages = {218--229},
  year = {2021},
  doi = {10.1038/s42256-021-00302-5}
}

@inproceedings{li2021fno,
  title = {Fourier Neural Operator for Parametric Partial Differential Equations},
  author = {Li, Zongyi and Kovachki, Nikola and Azizzadenesheli, Kamyar and Liu, Burigede and Bhattacharya, Kaushik and Stuart, Andrew and Anandkumar, Anima},
  booktitle = {International Conference on Learning Representations},
  pages = {1--16},
  year = {2021},
  eprint = {2010.08895},
  archivePrefix = {arXiv}
}

@article{zhang2024d2no,
  title = {{D2NO}: Efficient Handling of Heterogeneous Input Function Spaces with Distributed Deep Neural Operators},
  author = {Zhang, Zecheng and Moya, Christian and Lu, Lu and Lin, Guang and Schaeffer, Hayden},
  journal = {Computer Methods in Applied Mechanics and Engineering},
  volume = {428},
  pages = {117084},
  year = {2024},
  doi = {10.1016/j.cma.2024.117084}
}

@article{huang2022hompinns,
  title = {{HomPINNs}: Homotopy Physics-Informed Neural Networks for Learning Multiple Solutions of Nonlinear Elliptic Differential Equations},
  author = {Huang, Yao and Hao, Wenrui and Lin, Guang},
  journal = {Computers \& Mathematics with Applications},
  volume = {121},
  pages = {62--73},
  year = {2022},
  doi = {10.1016/j.camwa.2022.07.002}
}

@article{zheng2024hompinns,
  title = {{HomPINNs}: Homotopy Physics-Informed Neural Networks for Solving the Inverse Problems of Nonlinear Differential Equations with Multiple Solutions},
  author = {Zheng, Haoyang and Huang, Yao and Huang, Ziyang and Hao, Wenrui and Lin, Guang},
  journal = {Journal of Computational Physics},
  volume = {500},
  pages = {112751},
  year = {2024},
  doi = {10.1016/j.jcp.2023.112751}
}

@article{huang2006elm,
  title = {Extreme Learning Machine: Theory and Applications},
  author = {Huang, Guang-Bin and Zhu, Qin-Yu and Siew, Chee-Kheong},
  journal = {Neurocomputing},
  volume = {70},
  number = {1--3},
  pages = {489--501},
  year = {2006},
  doi = {10.1016/j.neucom.2005.12.126}
}

@article{cole1951quasilinear,
  title = {On a Quasi-Linear Parabolic Equation Occurring in Aerodynamics},
  author = {Cole, Julian D.},
  journal = {Quarterly of Applied Mathematics},
  volume = {9},
  number = {3},
  pages = {225--236},
  year = {1951},
  doi = {10.1090/qam/42889}
}

@article{hopf1950burgers,
  title = {The Partial Differential Equation $u_t + u u_x = \mu u_{xx}$},
  author = {Hopf, Eberhard},
  journal = {Communications on Pure and Applied Mathematics},
  volume = {3},
  number = {3},
  pages = {201--230},
  year = {1950},
  doi = {10.1002/cpa.3160030302}
}

@article{allen1979microscopic,
  title = {A Microscopic Theory for Antiphase Boundary Motion and Its Application to Antiphase Domain Coarsening},
  author = {Allen, Samuel M. and Cahn, John W.},
  journal = {Acta Metallurgica},
  volume = {27},
  number = {6},
  pages = {1085--1095},
  year = {1979},
  doi = {10.1016/0001-6160(79)90196-2}
}

@book{trefethen2000spectral,
  title = {Spectral Methods in {MATLAB}},
  author = {Trefethen, Lloyd N.},
  publisher = {Society for Industrial and Applied Mathematics},
  year = {2000},
  doi = {10.1137/1.9780898719598},
  isbn = {978-0-89871-465-4}
}

@article{bao2016diracMTI,
  title = {A Uniformly Accurate Multiscale Time Integrator Pseudospectral Method for the {Dirac} Equation in the Nonrelativistic Limit Regime},
  author = {Bao, Weizhu and Cai, Yongyong and Jia, Xiaowei and Tang, Qinglin},
  journal = {SIAM Journal on Numerical Analysis},
  volume = {54},
  number = {3},
  pages = {1785--1812},
  year = {2016},
  doi = {10.1137/15M1032375}
}

@article{bao2016dipolarNUFFT,
  title = {Accurate and Efficient Numerical Methods for Computing Ground States and Dynamics of Dipolar {Bose--Einstein} Condensates via the Nonuniform {FFT}},
  author = {Bao, Weizhu and Tang, Qinglin and Zhang, Yong},
  journal = {Communications in Computational Physics},
  volume = {19},
  number = {5},
  pages = {1141--1166},
  year = {2016},
  doi = {10.4208/cicp.scpde14.37s}
}

@article{bao2017dewettingPFEM,
  title = {A Parametric Finite Element Method for Solid-State Dewetting Problems with Anisotropic Surface Energies},
  author = {Bao, Weizhu and Jiang, Wei and Wang, Yan and Zhao, Quan},
  journal = {Journal of Computational Physics},
  volume = {330},
  pages = {380--400},
  year = {2017},
  doi = {10.1016/j.jcp.2016.11.015}
}

@article{bao2019logSchrodinger,
  title = {Regularized Numerical Methods for the Logarithmic {Schr{\"o}dinger} Equation},
  author = {Bao, Weizhu and Carles, R{\'e}mi and Su, Chunmei and Tang, Qinglin},
  journal = {Numerische Mathematik},
  volume = {143},
  number = {2},
  pages = {461--487},
  year = {2019},
  doi = {10.1007/s00211-019-01058-2}
}

@article{lukosevicius2009reservoir,
  title = {Reservoir Computing Approaches to Recurrent Neural Network Training},
  author = {Luko{\v{s}}evi{\v{c}}ius, Mantas and Jaeger, Herbert},
  journal = {Computer Science Review},
  volume = {3},
  number = {3},
  pages = {127--149},
  year = {2009},
  doi = {10.1016/j.cosrev.2009.03.005}
}

@article{wu2023comprehensive,
  title = {A Comprehensive Study of Non-Adaptive and Residual-Based Adaptive Sampling for Physics-Informed Neural Networks},
  author = {Wu, Chenxi and Zhu, Min and Tan, Qinyang and Kartha, Yadhu and Lu, Lu},
  journal = {Computer Methods in Applied Mechanics and Engineering},
  volume = {403},
  pages = {115671},
  year = {2023},
  doi = {10.1016/j.cma.2022.115671}
}

@article{dolean2024multilevel,
  title = {Multilevel Domain Decomposition-Based Architectures for Physics-Informed Neural Networks},
  author = {Dolean, Victorita and Heinlein, Alexander and Mishra, Siddhartha and Moseley, Ben},
  journal = {Computer Methods in Applied Mechanics and Engineering},
  volume = {429},
  pages = {117116},
  year = {2024},
  doi = {10.1016/j.cma.2024.117116}
}

@article{vanbeek2026featurefilter,
  title = {Local Feature Filtering for Scalable and Well-Conditioned Domain-Decomposed Random Feature Methods},
  author = {van Beek, Jan Willem and Dolean, Victorita and Moseley, Ben},
  journal = {Computer Methods in Applied Mechanics and Engineering},
  volume = {449},
  pages = {118583},
  year = {2026},
  doi = {10.1016/j.cma.2025.118583}
}

@article{anderson2026elmfbpinn,
  title = {{ELM-FBPINNs}: An Efficient Multilevel Random Feature Method},
  author = {Anderson, Samuel and Dolean, Victorita and Moseley, Ben and Pestana, Jennifer},
  journal = {Machine Learning for Computational Science and Engineering},
  volume = {2},
  number = {1},
  pages = {23},
  year = {2026},
  doi = {10.1007/s44379-026-00071-1}
}

@article{wu2026hybrid,
  title = {Are Deep Learning Based Hybrid {PDE} Solvers Reliable? Why Training Paradigms and Update Strategies Matter},
  author = {Wu, Yuhan and van Beek, Jan Willem and Dolean, Victorita and Heinlein, Alexander},
  journal = {Computing in Science \& Engineering},
  volume = {Early Access},
  pages = {1--12},
  year = {2026},
  doi = {10.1109/MCSE.2026.3696587}
}

\end{document}